\RequirePackage{xcolor}
\documentclass[journal,twoside,web,letterpaper]{IEEEtran}
\usepackage[noadjust]{cite}
\usepackage{amsmath,amssymb,amsfonts}
\usepackage{algorithmic}
\usepackage{graphicx}
\usepackage{algorithm,algorithmic}
\usepackage{hyperref}
\usepackage{textcomp}

\usepackage[makeroom]{cancel}

\usepackage{amsthm}

\newtheorem{counterexample}{Counterexample}

\newtheoremstyle{ieeeconf}
  {0pt}   
  {0pt}   
  {\normalfont}  
  {\parindent}       
  {\itshape} 
  {:}         
  { } 
  {\thmname{#1} \thmnumber{#2}\thmnote{ (#3)}} 
\makeatletter
\renewenvironment{proof}[1][\proofname]{\par
  \pushQED{\qed}%
  \normalfont \topsep\z@
  \trivlist
  \item[\hskip2em
        \itshape
    #1\@addpunct{:}]\ignorespaces
}{%
  \popQED\endtrivlist\@endpefalse
}
\makeatletter

\theoremstyle{ieeeconf}

\newtheorem{theorem}{Theorem}
\newtheorem{definition}{Definition}
\newtheorem{proposition}{Proposition}

\newtheorem{corollary}{Corollary}
\newtheorem{assumption}{Assumption}
\newtheorem{remark}{Remark}

\def\BibTeX{{\rm B\kern-.05em{\sc i\kern-.025em b}\kern-.08em
    T\kern-.1667em\lower.7ex\hbox{E}\kern-.125emX}}
\begin{document}
\title{State-space fading memory}
\author{Gustave Bainier, Antoine Chaillet, Rodolphe Sepulchre~\IEEEmembership{Fellow,~IEEE}, Alessio Franci
\thanks{Submitted for review the 25/03/2026.}
\thanks{This work was supported by European Research Council under the Advanced ERC Grant Agreement SpikyControl 101054323.}
\thanks{G. Bainier and A. Franci are with the Dept. of Electrical Engineering and Computer Science, University of Liège, 10 allée de la Découverte, Liège 4000, Belgium. A. Franci is also with the WEL Research Institute, Wavre 1300, Belgium (e-mail: gustave.bainier@uliege.be; afranci@uliege.be).}
\thanks{A. Chaillet is with Université Paris-Saclay, CNRS, CentraleSupélec, Laboratoire des signaux et systèmes, Gif-sur-Yvette 91190, France.}
\thanks{R. Sepulchre is with KU Leuven, Department of Electrical Engineering (ESAT), STADIUS Center for Dynamical Systems, Signal Processing and Data Analytics, KasteelPark Arenberg 10, Leuven 3001, Belgium, and also with the Department of Engineering, University of Cambridge, Trumpington Street, Cambridge CB2 1PZ, United Kingdom.}}

\maketitle

\begin{abstract}
The fading-memory (FM) property captures the progressive loss of influence of past inputs on a system's current output and has originally been formalized by Boyd and Chua in an operator-theoretic framework. Despite its importance for systems approximation, reservoir computing, and recurrent neural networks, its connection with state-space notions of nonlinear stability, especially incremental ones, remains understudied. This paper introduces a state-space definition of FM. In state-space, FM can be interpreted as an extension of incremental input-to-output stability ($\delta$IOS) that explicitly incorporates a memory kernel upper-bounding the decay of past input differences. It is also closely related to Boyd and Chua's FM definition, with the sole difference of requiring uniform, instead of general, continuity of the memory functional with respect to an input-fading norm. We demonstrate that incremental input-to-state stability ($\delta$ISS) implies FM semi-globally for time-invariant systems under an equibounded input assumption. Notably, Boyd and Chua's approximation theorems apply to $\delta$ISS state-space models. As a closing application, we show that, under mild assumptions, the state-space model of a current-driven memristive device possesses the FM property.
\end{abstract}

\begin{IEEEkeywords}
Fading Memory, Incremental Input-to-State Stability, Incremental Input-to-Output Stability
\end{IEEEkeywords}

\section{Introduction}
\label{sec:introduction}

\subsection{Fading memory and the representation of stable systems}

Fading-Memory (FM), introduced by Stephen~P.~Boyd and Leon~O.~Chua, is an incremental stability property of input-output (I/O) systems, where the current outputs depend on past inputs, but the influence of older inputs gradually diminishes~\cite{Boyd1985}. In their Theorem~2, Boyd and Chua showed that a large class of FM operators can be approximated arbitrarily well by a Wiener model~\cite{Wiener1958}, i.e. by a linear time-invariant state-space model followed by a nonlinear static readout:
\begin{subequations} \label{eq:initialRepresentationOfFMApproximationTheorem}
\begin{align}
\dot{x}(t)&=Ax(t)+Bu(t), \\
y(t)&=g(x(t)),
\end{align}
\end{subequations}
with $u$ and $y$ the input and output signals respectively. The state matrix $A$ is Hurwitz and the static readout $g$ is smooth, ensuring both internal and external stability. In practice, the Wiener model can be obtained through identification~\cite{Schoukens2017}. This result effectively provides a canonical representation of FM operators. However, it applies in the operator-theoretic context and does not directly translate to nonlinear state-space systems theory.

For linear systems, the situation is simpler: in their Theorem~5, Boyd and Chua have established that Linear Time-Invariant (LTI) operators have FM if and only if they admit a convolution representation~\cite{Boyd1985}. This makes FM an almost automatic property of LTI systems in engineering practice. LTI systems with FM enjoy well-known I/O properties such as Bounded-Input Bounded-Output (BIBO), Converging-Input Converging-Output (CICO), and Periodic-Input Periodic-Output (PIPO).

In contrast, for nonlinear systems, the notions of representation, internal stability, and external stability are largely decoupled. For state-space nonlinear systems, classical asymptotic stability concerns only the internal dynamics (convergence to an equilibrium in the absence of input) and does not automatically imply well-behaved I/O behaviors, nor vice-versa. Stronger stability notions are therefore required. One such a notion is incremental input-to-state stability ($\delta$ISS), which implies internal asymptotic stability, and I/O properties analogous to linear systems: namely, BIBO, CICO and PIPO~\cite{Angeli2002ALA,Angeli2009,Kato2025}. Moreover, $\delta$ISS is preserved under I/O interconnections satisfying a small-gain condition, making it particularly powerful for analyzing large-scale nonlinear systems~\cite{Angeli2002ALA}. However, unlike FM systems, $\delta$ISS does not readily provide a canonical representation theorem similar to Boyd and Chua's approximation result.

A state-space stability notion that recovers Boyd and Chua's FM is needed in order to apply their approximation result to nonlinear state-space models. This paper introduces this missing state-space FM notion. Rather unsurprisingly, state-space FM closely relates to the notion of $\delta$ISS. We show that, under equibounded inputs, $\delta$ISS implies the state-space FM property semi-globally, offering a simple pathway to generalize Boyd and Chua's approximation result to the state-space setting: $\delta$ISS implies state-space FM, which implies operator-theoretic FM, which yields the representation theorem. Our broader objective is to establish state-space FM as a nonlinear analog of the stable convolution property in LTI system theory, and thus to motivate the theoretical development of this state-space notion.

\subsection{Context of the study}

FM can be observed in many nonlinear systems: rheological materials~\cite{Coleman1968}, memristive systems~\cite{Forni2025}, Recurrent Neural Networks (RNNs)~\cite{Matthews1993}, just to give a few examples. More generally any stable linear system with input saturation possesses FM. Our own original motivation for studying FM came from neuroscience, where the ion-channel kinetics of the Hodgkin-Huxley model act as voltage-driven memristors~\cite{Hodgkin1952,CHUA2012}, although the current-driven case is discussed in this paper, as it is the more standard one in engineering.

Given the widespread nature of the FM property, several attempts have been made to establish its abstract and rigorous formalization. To the best of the authors' knowledge, the first rigorous formulation of the FM concept was introduced by Bernard~D.~Coleman in a series of articles focused on applications to continuum mechanics, specifically the viscoelastic and rheological behavior of polymers~\cite{Coleman1960,Coleman1966,Coleman1968}. While Coleman’s work had a significant impact on continuum mechanics and thermodynamics~\cite{ilhav2019,day2013thermodynamics}, the concept of FM remained largely unnoticed in the control and signal-processing communities, until it was re-discovered and re-formalized in 1985 by Boyd and Chua for I/O operators~\cite{Boyd1985}. While Boyd and Chua's work focused on time-invariant operators and their approximation, one of its long-standing corollary contribution is the convolution theorem for LTI operators, as mentioned earlier in this introduction~\cite{Ortega2024}.

Although this early formalization of FM continues to be one of the most widely recognized, a series of alternative definitions were also developed. The first of these appears in the work of Jeff S. Shamma, who characterized FM in terms of \textit{finite-memory operators}~\cite{Shamma1991}, and subsequently introduced with Rongze Zhao a distinction between uniform and pointwise FM~\cite{Shamma1993}. A drawback of the operator-theoretic definitions of FM is that they typically require inputs to be defined over the entire negative time axis, infinitely into the past. This limitation motivated Irwin W. Sandberg to reformulate FM so that it applies to operators defined on both discrete-time and continuous-time positive domains $\mathbb{N}$~and~$[0,+\infty)$~\cite{Sandberg2001Z,Sandberg2002R,Sandberg2003}.

Research efforts have also been dedicated to study operators with the FM property. Notable contributions include the examination of linear FM operators~\cite{Partington1996}, system identification of FM systems~\cite{Dahleh1995}, and work clarifying the relationship between FM and the conventional concept of asymptotic stability for discrete-time systems~\cite{Zang2003}. Two research directions related to FM operators can be identified: the \textit{representation} question, i.e. which classes of systems can approximate any FM operator with arbitrary precision; and the \textit{universality} question, i.e. what class of systems can FM operators represent. Crucially, these questions helped to clarify key properties and limitations of several machine learning architectures. In Boyd and Chua’s original approximation theorem~\cite{Boyd1985}, it is shown, by virtue of the Stone-Weierstrass theorem, that (under mild assumptions) all time-invariant FM operators can be approximated by a finite Volterra series and by a state-space LTI system followed by a static nonlinear readout. This demonstrates that Wiener systems are universal for the class of FM systems. Further results have established that discrete-time FM systems can be approximated by RNNs~\cite{Matthews1993}, and bounds on the computational capacity of discrete-time systems with the FM property have also been derived~\cite{Dambre2012}. A clear articulation of FM's relationship to related concepts in RNNs can be found in this recent study~\cite{Ortega2025}. The FM property has also been investigated in the context of reservoir computing: since a RNN with FM holds recent input history in its current state while forgetting older inputs, training a simple readout of the state can be used to achieve real-time temporal computation, without having to train the recurrent dynamics itself. Theoretically, reservoir systems with FM have been established as universal~\cite{Gonon2021}, though subsequent analysis refined this view, asserting that the FM property was not strictly necessary for reservoir systems to achieve universality~\cite{Sugiura2024}.

Despite the recent development of a kernel-based definition of FM~\cite{Huo2024}, the operator-theoretic notion of FM has attracted little attention from control theorists~\cite{Sepulchre2021}. One notable exception is a thesis on the identification of neuronal system, which contains a proof that Slotine's contraction~\cite{LOHMILLER1998} implies the FM notion of Boyd and Chua, in the discrete-time setting, on compact sets and for equibounded inputs (Section~4.2.1 of~\cite{Burghi2019}). The control community has instead focused its interests on state-space definitions of FM~\cite{Karafyllis2018}. The FM state-space notion that is the closest to its operator theoretic counterpart is input-to-state stability (ISS), a property which imposes that the state of the system is asymptotically attracted to an equilibrium, up to a bounded error related to the largest past input value~\cite{Sontag2008}. In ISS systems, inputs that converge to zero will typically still steer the system toward its equilibrium, as past input values are progressively forgotten over time. This explicit FM property of ISS systems was first introduced in~\cite{Praly1996} using the notion of exp-ISS, and was later generalized to input-to-state dynamical stability (ISDS)~\cite{Grune2002}. However, the exact link between these notions and Boyd and Chua's original FM definition remains elusive. In particular, although the time-discount is explicit, neither exp-ISS nor ISDS captures the inherently \textit{incremental} nature of Boyd and Chua's FM, which is characterized by how mismatches in past inputs propagate to mismatches in current outputs. Some time-discounted incremental stability notions have recently been introduced in the control literature~\cite{Allan2021,Schiller2023}, but they are input-output-to-state, and were developed to study nonlinear detectability, a motivation unconnected to FM. We argue that the incremental state-space notion of FM is missing from the literature, and that incremental input-to-output stability ($\delta$IOS) and $\delta$ISS are much more natural candidates to obtain a state-space equivalent of Boyd and Chua's FM~\cite{Sontag1989}.

\subsection{Contributions}

After reviewing some preliminary definitions in Section~\ref{sec:notations}, the new state-space definition of FM is proposed in Section~\ref{sec:FM}. This new definition is shown to be closely related to both Boyd and Chua's original operator-theoretic definition, and to the $\delta$IOS property of state-space models. It is in fact the $\delta$IOS property with the addition of an explicit \textit{memory kernel} leading to the decay of past input values. The new definition is first motivated through the example of a memristive device in Section~\ref{ssec:Motivation}, and then properly introduced in Section~\ref{ssec:FMDef}. Some elementary results are derived from the existence of the memory kernel in Section~\ref{ssec:FMProp}.

The connection between our definition and Boyd and Chua's definition in terms of uniform and general continuity is presented in Section~\ref{sec:BoydFM}.

Is the introduced state-space FM equivalent to the $\delta$ISS property, and can Boyd and Chua's approximation theorem be applied to $\delta$ISS systems? Section~\ref{sec:deltaISSFM} is dedicated to answering these two questions. After introducing the $\delta$ISS property in Section~\ref{ssec:recapdeltaISS}, it is shown that $\delta$ISS implies input-to-state FM on all compact sets of initial conditions for time-invariant systems with equibounded inputs in Section~\ref{ssec:deltaISSFMyes} (thus, the input-to-state FM holds semi-globally). The representation theorem obtained by applying Boyd and Chua's approximation results to $\delta$ISS state-space models is briefly stated in Section~\ref{ssec:deltaISSapproximation}.

Finally, we also exhibit simple conditions for the state-space model of a current-driven memristive device to have FM in Section~\ref{sec:Mem}. Namely, that the current remains bounded, that the internal dynamic of the resistance is $\delta$ISS, and that the memristance value is Lipschitz-continuous with respect to its internal state. 

Section~\ref{sec:end} briefly concludes the paper and offers several perspectives for future works. 

Some settings where $\delta$ISS does not imply input-to-state FM are discussed in Appendix~\ref{app:deltaISSFMno}.

\section{Notations}\label{sec:notations}

A function $\alpha : [0,+\infty) \rightarrow [0,+\infty)$ is of class $\mathcal{K}$ if it is continuous, strictly increasing, and $\alpha(0)=0$. Moreover, it is of class $\mathcal{K}_{\infty}$ if it is of class $\mathcal{K}$ and $\lim_{r \to \infty} \alpha(r) = +\infty$.

A function $\beta : [0,+\infty) \times [0,+\infty) \rightarrow [0,+\infty)$ is of class $\mathcal{KL}$ if it is continuous, and such that for each $s\in [0,+\infty)$, $\beta(\cdot{},s) \in \mathcal{K}$, and for each $r\in [0,+\infty)$,  $\beta(r,\cdot{})$ is nonincreasing on $[0,+\infty)$ with $\lim_{s \to+\infty} \beta(r,s) = 0$.

A set of function $\mathcal{F}$ is equibounded if all its functions are uniformly bounded, i.e. there exists $M > 0$ such that $\lVert f(x) \rVert \le M$ for all $f \in \mathcal{F}$.

\section{FM for state-space models} \label{sec:FM}

\subsection{Motivating example: a current-driven memristor}\label{ssec:Motivation}

A memristor is a two-terminal electronic device whose resistance, or memristance, $M>0$ varies over time, typically depending on the history of current $I$ that has flowed through it, providing its behavior with a built-in memory of past activity~\cite{Chua1976}. For a memristor, Ohm's law is typically given in state-space representation and reads:
\begin{subequations} \label{eq:Memristor}
    \begin{align}
         U(t)&=M(x(t))I(t), \label{eq:Memristor1} \\
         \dot{x}(t)&=f(x(t),I(t)), \label{eq:Memristor2}
    \end{align}
\end{subequations}
where $U(t)$ stands for the voltage across the memristor. Equation~\eqref{eq:Memristor1} resembles the standard $U = RI$ relationship of a Ohmic resistor, except that the resistance $R$ is replaced by $M$, a memristance function, which depends on the internal state $x$ of the memristor. The memory effect of the memristance arises from the dynamics of the internal state variable~\eqref{eq:Memristor2}. In many memristive devices, the influence of past activity gradually diminishes, and they can serve as a canonical example of a FM system~\cite{Forni2025}.

While originally, Boyd an Chua defined FM through the continuity of the system operator with respect to a fading norm~\cite{Boyd1985}, our formalization highlights the system's retained memory by relying on its \textit{input-to-output contraction} properties. To gain some intuition, consider two identical memristive devices driven by two input currents $I_a$ and $I_b$ that differ before some time $t^*$ but become identical thereafter. Because the two devices retain a memory of past inputs, their corresponding voltages $U_a$ and $U_b$ still differ for $t\geq t^*$. However, as time progresses, the influence of the earlier discrepancies between $I_a$ and $I_b$ fades away, and the two voltages gradually converge toward each other at a rate characterizing the temporal scale of the system's memory. The paper proposes this systematic \textit{output contraction} in response to \textit{input contraction} as an alternative way to formalize FM. This definition will ultimately be shown to be closely related to that of Boyd and Chua, thus enabling the use of their original approximation theorem~\cite{Boyd1985}. In particular, if the memristive device has FM, then its model can can be approximated by a state-space LTI system with a static nonlinear readout, effectively removing the nonlinearity from the dynamical equation~\eqref{eq:Memristor2} of the original state-space model. Practical conditions for this FM property to hold on a memristive device are given at the end of the paper (Section~\ref{sec:Mem}).

\subsection{Definition} \label{ssec:FMDef}

Consider the nonlinear state-space model
\begin{subequations} \label{eq:StateSpace}
    \begin{align}
         \dot{x}(t)&=f(x(t),u(t)), \label{eq:StateSpace1} \\
         y(t)&=g(x(t),u(t)),  \label{eq:StateSpace2}
    \end{align}
\end{subequations}
where $x(t) \in \mathbb{R}^{n_x}$ is the state of the system, $u \in \mathcal{U}^+ \subset L_{\infty}([0,+\infty),\mathbb{R}^{n_u})$ is its input (the $+$ superscript in $\mathcal{U}^+$ denotes the positive-time domain $[0,+\infty)$), and $y(t) \in \mathbb{R}^{n_y}$ is its output. It is assumed that \eqref{eq:StateSpace1} is forward complete, meaning for all initial condition $x(0) \in \mathcal{X}_0 \subseteq \mathbb{R}^{n_x}$ and input $u \in \mathcal{U}^+$, the state trajectory $x$ is uniquely defined by \eqref{eq:StateSpace1} on the maximal interval of existence $t\in[0,+\infty)$, and moreover that $g$ is smooth in both arguments.

Intuitively, we propose that \eqref{eq:StateSpace} has the FM property if:
\begin{enumerate}
  \item The system is incrementally asymptotically output stable for all input signals $u\in \mathcal{U}^+$. This means that for any mismatch of initial conditions $\|x_a(0)-x_b(0)\|$, the trajectories of the two outputs, $y_a$ and $y_b$, considered with a common input signal $u\in \mathcal{U}^+$, converge toward each other uniformly across all initial conditions.
  \item The system is incrementally input-to-output stable (i.e. $\delta$IOS). This means that for two distinct input signals $u_a$ and $u_b \in \mathcal{U}^+$, $y_a$ and $y_b$ still converge to each other, but only up to a mismatch that scales with the maximum past mismatch between $u_a$ and $u_b$.
  \item The system's output gradually forgets its past inputs according to a uniform memory mechanism. This means that the mismatch between $y_a$ and $y_b$ is computed by weighting past mismatches between $u_a$ and $u_b$ according to their temporal distance, with respect to a system-specific fading \textit{memory kernel}. In particular, if $u_a$ and $u_b$ converge asymptotically to one another, then so do $y_a$ and $y_b$.
\end{enumerate}
Although these three properties were presented sequentially, it should be noted that $3)$ implies $2)$ which in turn implies $1)$. Consequently, a formalization of $3)$ alone would offer a complete characterization of FM. The memory kernel described in point $3)$ is defined by a function $w:[0,+\infty) \to [0,1]$, with $w(\Delta t)$ the decay weight associated to the input mismatch between $u_a$ and $u_b$ after a time $\Delta t \geq 0$. To ensure that the system exhibits FM, the discrepancies in the distant past must have progressively less influence on the present output behavior, as formalized in the following definition.
\begin{definition}[Memory kernel] \label{def:memorykernel} A function $w:[0,+\infty) \to [0,1]$ is a memory kernel if it is continuous, nonincreasing, and satisfies
    \begin{equation} \label{eq:MemoryKernelLimit}
        \lim_{\Delta t \to +\infty} w(\Delta t) = 0.
    \end{equation}    
\end{definition}
The formal definition of our state-space FM property is now introduced:

\begin{definition}[State-space FM] \label{def:FM}
    The system \eqref{eq:StateSpace} has $w$-FM if there exists $\beta \in \mathcal{KL}$, $\gamma \in \mathcal{K}_{\infty}$, and a memory kernel $w$ satisfying Definition~\ref{def:memorykernel} such that for any two initial conditions, $x_a(0),x_b(0) \in \mathcal{X}_0$, considered with two input trajectories, $u_a, u_b\in \mathcal{U}^+ $, the corresponding outputs satisfy for all $t\geq 0$:
\begin{align} \label{eq:FM_definition}
    \| y_a(t)-y_b(t) \| &\leq  \;\beta\big(\| x_a(0)-x_b(0) \|,t\big) \\&+\gamma\big(\operatorname{ess\,sup}_{s\in[0,t]} w(t-s)\| u_a(s)-u_b(s)\| \big). \nonumber
\end{align}
\end{definition}
\begin{remark}
   A qualitatively equivalent definition of FM is given by:
   \begin{align} \label{eq:FM_definition_max}
        \| y_a(t)-y_b(t) \| &\leq  \; \max \Big(\tilde{\beta}\big(\| x_a(0)-x_b(0) \|,t\big) , \\& \tilde{\gamma}\big(\operatorname{ess\,sup}_{s\in[0,t]} w(t-s)\| u_a(s)-u_b(s)\| \big) \Big). \nonumber
   \end{align}
    Succinctly, \eqref{eq:FM_definition} implies \eqref{eq:FM_definition_max} with $(\tilde{\beta},\tilde{\gamma})=(2\beta,2\gamma)$, and \eqref{eq:FM_definition_max} implies \eqref{eq:FM_definition} with $(\beta,\gamma)=(\tilde{\beta},\tilde{\gamma})$.
\end{remark}
\begin{remark}
    $\mathcal{X}_0$ may need to be taken forward invariant to ensure trajectories remain in the region where the inequality applies.
\end{remark}

This is essentially a $\delta$IOS property with the addition of a input-fading mechanism.

\begin{definition}[Input-to-state FM] \label{def:ISSFM}
    The system \eqref{eq:StateSpace} has input-to-state $w$-FM if it has $w$-FM with respect to the output $y=x$.
\end{definition}

Similarly, this is essentially a $\delta$ISS property with the addition of a input-fading mechanism.

\subsection{Elementary results using the memory kernel}\label{ssec:FMProp}

Similarly to most comparison functions in control, the memory kernel $w(\Delta t)$ provides an \textit{upper-bound} on the influence of past inputs on the current output. It is therefore not uniquely defined. A memory kernel satisfying inequality \eqref{eq:FM_definition} offers an \textit{a priori conservative} characterization of the system's memory constraints: the system may, in fact, be more forgetful than the memory kernel suggests. This ensures that the continuity and nonincreasing requirements of Definition~\ref{def:memorykernel} are obtained without loss of generality. This idea is formalized in the following proposition and corollary:

\begin{proposition}[$w_1 \leq w_2 \Rightarrow$ ($w_1$-FM $\Rightarrow$ $w_2$-FM)] \label{prop:MemoryKernelComp} Let the system \eqref{eq:StateSpace} have $w_1$-FM. The system \eqref{eq:StateSpace} also has $w_2$-FM for any memory kernel $w_2:[0,+\infty) \to [0,1]$ such that for all $\Delta t \geq 0$, $w_1(\Delta t) \leq w_2(\Delta t)$.
\end{proposition}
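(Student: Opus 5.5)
The argument is a direct monotonicity argument: we keep the same comparison functions $\beta \in \mathcal{KL}$ and $\gamma \in \mathcal{K}_\infty$ that certify $w_1$-FM and show they also certify $w_2$-FM. The statement assumes $w_2$ is a memory kernel in the sense of Definition~\ref{def:memorykernel}, so there is nothing to check about $w_2$ itself. The only work is to compare the input-dependent terms in \eqref{eq:FM_definition}.

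Fix $x_a(0),x_b(0)\in\mathcal{X}_0$, inputs $u_a,u_b\in\mathcal{U}^+$ and a time $t\geq 0$. For every $s\in[0,t]$ we have $t-s\geq 0$, so the hypothesis gives $w_1(t-s)\leq w_2(t-s)$. Multiplying by the nonnegative quantity $\|u_a(s)-u_b(s)\|$ gives the pointwise inequality
\begin{equation*}
w_1(t-s)\|u_a(s)-u_b(s)\|\leq w_2(t-s)\|u_a(s)-u_b(s)\|, \quad s\in[0,t].
\end{equation*}
The essential supremum over $[0,t]$ is monotone with respect to pointwise (even almost-everywhere) inequalities. Hence
\begin{equation*}
\operatorname{ess\,sup}_{s\in[0,t]} w_1(t-s)\|u_a(s)-u_b(s)\| \leq \operatorname{ess\,sup}_{s\in[0,t]} w_2(t-s)\|u_a(s)-u_b(s)\|.
\end{equation*}
Both sides are finite, because $u_a,u_b\in L_\infty$ and the kernels take values in $[0,1]$. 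Since $\gamma\in\mathcal{K}_\infty$ is increasing, applying $\gamma$ preserves this inequality. Substituting into the $w_1$-FM bound \eqref{eq:FM_definition} then yields exactly \eqref{eq:FM_definition} with $w_2$ in place of $w_1$ and the same $\beta$ and $\gamma$, which proves the claim.

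There is no real obstacle in this argument. The only points needing a moment's care are measurability and finiteness of the weighted mismatch, so that the essential supremum is well defined and its monotonicity applies. Measurability holds because $w_i$ is continuous and $u_a-u_b$ is measurable. Finiteness holds because the weights lie in $[0,1]$ and the inputs are essentially bounded.
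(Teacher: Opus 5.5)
Your proposal is correct and matches the paper's argument: keep the same $\beta$ and $\gamma$, and upper-bound the input term in \eqref{eq:FM_definition} using $w_1\le w_2$ pointwise, the monotonicity of the essential supremum, and the monotonicity of $\gamma$. The paper simply calls this immediate, and your write-up spells out the same steps.
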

\begin{proof}
    The proof is immediate by upper-bounding \eqref{eq:FM_definition} in Definition~\ref{def:FM}.
\end{proof}

\begin{corollary} \label{cor:MemoryKernelComp} If \eqref{eq:StateSpace} has $w_1$-FM, with $w_1:[0,+\infty) \to [0,1]$ only satisfying condition~\eqref{eq:MemoryKernelLimit}, then there exists a memory kernel $w_2$ satisfying Definition~\ref{def:memorykernel} such that the system \eqref{eq:StateSpace} has $w_2$-FM.
\end{corollary}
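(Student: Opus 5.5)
We construct $w_2$ from $w_1$ explicitly as a continuous, nonincreasing upper envelope with values in $[0,1]$ that still tends to zero, and then invoke Proposition~\ref{prop:MemoryKernelComp}. The point is that Proposition~\ref{prop:MemoryKernelComp} only needs $w_1\le w_2$ pointwise, and its proof (upper-bounding the essential supremum in \eqref{eq:FM_definition}, with $\gamma$ nondecreasing) never uses any regularity of $w_1$. So once we have such a majorant, $w_2$-FM follows with the same $\beta$ and $\gamma$.

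\textbf{Step 1 (monotone envelope).} Define $\bar w(\Delta t) := \sup_{\tau \ge \Delta t} w_1(\tau)$. Since $w_1$ takes values in $[0,1]$, the supremum exists and $\bar w$ maps into $[0,1]$. It is nonincreasing, because the supremum is taken over shrinking sets, and $\bar w \ge w_1$. Condition \eqref{eq:MemoryKernelLimit} for $w_1$ gives $\bar w(\Delta t)\to 0$: for any $\varepsilon>0$ there is $T$ with $w_1(\tau)<\varepsilon$ for all $\tau\ge T$, hence $\bar w(\Delta t)\le\varepsilon$ for $\Delta t\ge T$. Note that $w_1$ need not be measurable; this does not matter, since only pointwise bounds are used.

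\textbf{Step 2 (continuity).} The function $\bar w$ may have downward jumps, so we smooth it from above with a forward-looking average that shifts backward in time:
$$w_2(\Delta t) := \int_{\Delta t-1}^{\Delta t} \bar w(\max(\tau,0))\,d\tau .$$
This integral is well defined because $\bar w$ is monotone, hence measurable. We then check the four required properties.
\begin{itemize}
\item \emph{Domination.} Since $\bar w$ is nonincreasing and $\max(\tau,0)\le \Delta t$ on the integration range, the integrand is at least $\bar w(\Delta t)$. Hence $w_2(\Delta t)\ge \bar w(\Delta t)\ge w_1(\Delta t)$.
\item \emph{Range and monotonicity.} The integrand lies in $[0,1]$ and the interval has length $1$, so $w_2\in[0,1]$. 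The function $w_2$ is nonincreasing because it is a sliding average of a nonincreasing function.
\item \emph{Continuity.} $w_2$ is Lipschitz with constant $1$, since it is the integral of a bounded function over a sliding window.
\item \emph{Decay.} $w_2(\Delta t)\le \bar w(\max(\Delta t-1,0))\to 0$.
\end{itemize}
Therefore $w_2$ satisfies Definition~\ref{def:memorykernel}.

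\textbf{Step 3.} Apply Proposition~\ref{prop:MemoryKernelComp} with $w_1\le w_2$ to conclude that the system has $w_2$-FM.

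The only delicate point is Step 2: producing a \emph{continuous} majorant that is also nonincreasing and keeps the limit $0$. The backward-shifted averaging above handles all three properties at once. An alternative is the piecewise-linear interpolation of $\bar w$ between integer points, using the value $\bar w(n)$ on $[n,n+1]$ linearly connected from $\bar w(n-1)$.
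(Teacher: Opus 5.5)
Your proof is correct and follows the same route as the paper. You take the nonincreasing tail-supremum envelope $\bar w(\Delta t)=\sup_{\tau\ge\Delta t}w_1(\tau)$, regularize it into a continuous, nonincreasing, $[0,1]$-valued majorant with the same zero limit, and conclude with Proposition~\ref{prop:MemoryKernelComp}. The only difference is that you make the regularization explicit with the backward-shifted unit-window average, and your checks of domination, monotonicity, $1$-Lipschitz continuity and decay are all correct, whereas the paper simply invokes ``standard regularization'' and Lemma~2 of \cite{Kellett2014}.
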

\begin{proof} From Proposition~\ref{prop:MemoryKernelComp}, it is sufficient to show that for all $w_1:[0,+\infty)\rightarrow [0,1]$ such that $\lim_{x\to +\infty }w_1(x)=0$, there exists a continuous and nonincreasing $w_2:[0,+\infty)\rightarrow [0,1]$ such that $\lim_{x\to +\infty }w_2(x)=0$ and $w_2\geq w_1$ pointwise. This is, in fact, a usual result for comparison functions, see Lemma~2 of \cite{Kellett2014}. One can define $\tilde{w}(x) \triangleq \sup_{y\geq x} w_1(x)$ and obtain such $w_2$ from $\tilde{w}$ through standard regularization.
\end{proof}

A common choice for the memory kernel is the exponentially decaying function $w(\Delta t) = e^{-\mu \Delta t}$, which arises naturally in exponentially stable linear systems, such as the standard low-pass filter $\tau \dot{y}(t)=-y(t)+u(t)$, for which the following FM inequality holds:
\begin{align}
    \|\Delta y(t)\| 
    \leq& e^{-t/\tau}\|\Delta y(0)\|\\
    &+ 2 \,\operatorname{ess\,sup}_{s\in[0,t]} e^{-(t-s)/2\tau} \| \Delta u(s)\|. \nonumber
\end{align}

Similar to the properties of $\delta$ISS systems (see Proposition~4.4 and Proposition~4.5 in~\cite{Angeli2002ALA}), for FM systems, converging inputs lead to converging outputs (CICO), and periodic inputs lead to periodic outputs (PIPO). The first property is, however, more general than that of $\delta$ISS systems, since it solely relies on the memory kernel and do not require specific state-space properties (such as time-invariant or delay-free dynamics).
\begin{proposition}[$\Delta u(t) \to 0 \Rightarrow \Delta y(t) \to 0$] \label{prop:CICO}
    If the FM inequality \eqref{eq:FM_definition} stands and if $u_a$ and $u_b$ converge asymptotically to one another, then $y_a$ and $y_b$ do as well.
\end{proposition}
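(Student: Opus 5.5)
The plan is to show that both terms on the right-hand side of \eqref{eq:FM_definition} tend to zero as $t\to+\infty$. The term $\beta(\|x_a(0)-x_b(0)\|,t)$ vanishes directly by the $\mathcal{KL}$ property. Since $\gamma\in\mathcal{K}_\infty$ is continuous with $\gamma(0)=0$, it then suffices to show that
\[
\Phi(t) \triangleq \operatorname{ess\,sup}_{s\in[0,t]} w(t-s)\|u_a(s)-u_b(s)\| \to 0 .
\]
Two standing facts will be used. First, $u_a,u_b\in L_\infty$, so $M\triangleq \operatorname{ess\,sup}_{s\geq0}\|u_a(s)-u_b(s)\|<+\infty$. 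Second, $w$ takes values in $[0,1]$.

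The core step is an $\varepsilon$-splitting of the time interval $[0,t]$ into an old part and a recent part. Fix $\varepsilon>0$. Because $\Delta u(s)\to0$ (interpreted essentially, i.e. the essential supremum of $\|\Delta u\|$ on $[T,\infty)$ tends to zero), there is $T\ge 0$ such that $\|u_a(s)-u_b(s)\|\le\varepsilon$ for almost every $s\ge T$. By \eqref{eq:MemoryKernelLimit} and monotonicity of $w$, there is $D\ge0$ such that $w(\Delta t)\le \varepsilon/(M+1)$ for all $\Delta t\ge D$. Now take $t\ge T+D$ and bound each piece separately.
\begin{itemize}
\item On the old part $s\in[0,T]$, we have $t-s\ge D$, so $w(t-s)\|\Delta u(s)\|\le \varepsilon M/(M+1)<\varepsilon$ almost everywhere.
\item On the recent part $s\in[T,t]$, we have $w\le1$, so $w(t-s)\|\Delta u(s)\|\le\varepsilon$ almost everywhere.
\end{itemize}
Hence $\Phi(t)\le\varepsilon$ for all $t\ge T+D$, which gives $\Phi(t)\to0$.

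Combining the two terms, $\limsup_{t\to\infty}\|y_a(t)-y_b(t)\|\le \lim_t\beta(\cdot,t)+\gamma(\varepsilon)=\gamma(\varepsilon)$. Letting $\varepsilon\to0$ and using the continuity of $\gamma$ at $0$ concludes the proof.

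The main subtlety, more than a real obstacle, is the "old input" part. Its control requires the uniform bound $M$ on past mismatches, supplied by $u\in L_\infty$. It also requires handling the convergence of $\Delta u$ in the essential sense so that the essential suprema are well behaved. No state-space structure is used, consistent with the remark preceding the proposition.
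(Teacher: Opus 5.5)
Your proof is correct and follows essentially the same route as the paper: you split the window into an old part, controlled by the $L_\infty$ bound on $\Delta u$ together with $w(\Delta t)\to 0$, and a recent part, controlled by $w\le 1$ and $\Delta u\to 0$. The only cosmetic differences are that the paper splits at $t/2$ and sums two $\varepsilon/2$ bounds, whereas you split at a fixed time $T$ and take a maximum; you also spell out the vanishing of the $\beta$ term and the continuity of $\gamma$ at $0$, which the paper leaves implicit.
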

\begin{proof}
    Let $u_a,u_b\in \mathcal{U}^+$ be such that $\lim_{t\to +\infty} \| u_a(t)-u_b(t)\|= 0$. Since the input signals are in $L_{\infty}$, there exists $M>0$ such that $\operatorname{ess\,sup}_{t \geq 0} \| u_a(t)-u_b(t) \| <M$. Moreover, for concision, let us introduce $h(t,s) \triangleq  w(t-s) \| u_a(s)-u_b(s)\|$. It is sufficient to show that
    \begin{equation} \label{eq:ConvergenceConditionToBeVerified}
        \lim_{t\to +\infty}\operatorname{ess\,sup}_{s\in[0,t]}h(t,s) = 0.
    \end{equation}
    Let $\varepsilon>0$. Since $\lim_{t\to +\infty} w(t)= 0$, there exists $t_1>0$ such that for all $t\geq t_1$, $w(t) \leq \varepsilon/(2M)$. Similarly, since $\lim_{t\to +\infty} \| u_a(t)-u_b(t)\|= 0$, there exists $t_2>0$ such that for all $t\geq t_2$, $\| u_a(t)-u_b(t)\| \leq \varepsilon/2$. The following inequality holds:
    \begin{align}
        \operatorname{ess\,sup}_{s\in[0,t]} h(t,s) \leq & \operatorname{ess\,sup}_{s\in[0,t/2]}h(t,s)  \\
        &+ \operatorname{ess\,sup}_{s\in[t/2,t]}h(t,s). \nonumber
    \end{align}
    On one hand, for all $t\geq 2 t_1$:
    \begin{align}
        \operatorname{ess\,sup}_{s\in[0,t/2]}h(t,s) \leq \operatorname{ess\,sup}_{s\in[0,t/2]} w(t-s) M \leq \frac{\varepsilon}{2}.
    \end{align}
    On the other hand, for all $t\geq 2 t_2$:
    \begin{align}
        \operatorname{ess\,sup}_{s\in[t/2,t]}h(t,s) \leq \operatorname{ess\,sup}_{s\in[t/2,t]} \| u_a(s)-u_b(s)\| \leq \frac{\varepsilon}{2}.
    \end{align}
    Overall, if $t\geq 2 \max(t_1,t_2)$, then $\operatorname{ess\,sup}_{s\in[0,t]}h(t,s) \leq  \varepsilon$, which demonstrates \eqref{eq:ConvergenceConditionToBeVerified}.
\end{proof}

\begin{proposition}[Entrainment by periodic input] \label{prop:PIPO}
    Let the system~\eqref{eq:StateSpace} be bounded-input bounded-state (BIBS) and FM (Definition~\ref{def:FM}). If the input $u$ is $T$-periodic, then the output $y$ converges to a $T$-periodic trajectory, independently of the initial conditions.
\end{proposition}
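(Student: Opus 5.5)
The plan is to compare the trajectory with its own time shift, and then to pass to a limit along multiples of the period. The system is time-invariant, with $f$ and $g$ not depending on $t$. Fix an initial condition $x(0)\in\mathcal{X}_0$ and a $T$-periodic input $u\in\mathcal{U}^+$, and denote by $x$ and $y$ the resulting state and output.

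\textbf{Step 1: time shift.} For each integer $k\ge 0$, define the shifted trajectory $x_k(t)\triangleq x(t+kT)$. By time invariance and periodicity, $x_k$ is the solution of \eqref{eq:StateSpace1} with input $u(\cdot+kT)=u$ and initial condition $x(kT)$. Its output is $y_k(t)=y(t+kT)$. To apply \eqref{eq:FM_definition} to the pair $(x,x_k)$, we need $x(kT)\in\mathcal{X}_0$, so we take $\mathcal{X}_0$ forward invariant, as suggested in the remark following Definition~\ref{def:FM}. The two inputs coincide, so the memory term vanishes and we get
\begin{equation*}
\|y(t+kT)-y(t)\|\le \beta\big(\|x(kT)-x(0)\|,t\big).
\end{equation*}

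\textbf{Step 2: Cauchy argument.} The BIBS assumption gives $\|x(t)\|\le R$ for all $t\ge0$, for some $R$ depending on $x(0)$ and $\|u\|_\infty$. Hence $\|x(kT)-x(0)\|\le 2R$ uniformly in $k$. Apply Step 1 at time $t+mT$ with shift $kT$, i.e.\ starting from $x(mT)$ instead of $x(0)$, which is legitimate by time invariance. This yields
\begin{equation*}
\|y(t+(m+k)T)-y(t+mT)\|\le \beta(2R,t+mT)
\end{equation*}
for all $k$. Since $\beta(2R,\cdot)$ decreases to $0$, for each fixed $t\in[0,T)$ the sequence $(y(t+mT))_m$ is Cauchy, uniformly in $t\ge 0$. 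Define $\bar y(t)\triangleq\lim_{m\to\infty}y(t+mT)$. Then $\bar y(t+T)=\bar y(t)$ by construction, so $\bar y$ is $T$-periodic. The uniform Cauchy estimate gives $\sup_{t\in[\tau,\infty)}\|y(t)-\bar y(t)\|\le \beta(2R,\tau)\to0$, so $y$ converges to the $T$-periodic trajectory $\bar y$.

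\textbf{Step 3: independence of the initial condition.} Take a second initial condition $x'(0)$ with the same input. Definition~\ref{def:FM} then gives $\|y(t)-y'(t)\|\le\beta(\|x(0)-x'(0)\|,t)\to0$. Hence both limits $\bar y$ and $\bar y'$ are $T$-periodic and asymptotic to each other. Two periodic functions that converge to each other must coincide, so $\bar y=\bar y'$.

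The main obstacle is Step 2. The $\mathcal{KL}$ bound must be uniform over the shifted initial mismatches $x(mT)-x((m+k)T)$, and this is exactly where BIBS enters, by keeping those mismatches below $2R$. A secondary point is the forward invariance of $\mathcal{X}_0$, which is needed so that the intermediate states $x(mT)$ are admissible initial conditions. Only the $\beta$ term of the FM inequality is used, so the memory kernel plays no role in this argument.
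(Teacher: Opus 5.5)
Your proof is correct and follows essentially the same route as the paper. You compare the trajectory with its own $kT$-shift under the identical periodic input, so only the $\beta$ term survives, and use BIBS to bound the initial mismatch by a constant ($2R$ for you, $2M_x$ after a transient $t^*$ in the paper). You then obtain a Cauchy property along multiples of $T$ (the paper phrases it in $L_\infty([0,T))$) and independence from the initial condition through $\beta$. You are also right to flag the forward invariance of $\mathcal{X}_0$, which the paper uses tacitly.

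One small slip: the bound $\beta(2R,t+mT)$ in Step 2 comes from evaluating the Step 1 inequality (initial states $x(0)$ and $x(kT)$) at time $t+mT$. It does not come from restarting at $x(mT)$, which would only give $\beta(2R,t)$.
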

\begin{proof}
    Let $u\in \mathcal{U}^+$. Since the input signal is in $L_{\infty}$, there exists $M_u>0$ such that $\operatorname{ess\,sup}_{t \geq 0} \| u(t) \| <M_u$. Now consider the trajectory $x$ of initial condition $x(0)\in \mathcal{X}_0$. The BIBO property implies that there exists $M_x>0$ and $t^*>0$ such that for all $t\geq t^*$, $\| x(t) \| < M_x$. Continuity of $g$ then implies $y \in L_{\infty}([0,+\infty),\mathbb{R}^{n_y})$. Let $t_1=t^*+k_1T+\delta$ and $t_2=t^*+k_2T+\delta$ with $k_1,k_2\in\mathbb{N}$ and $\delta \in[0,T)$. Let us assume without loss of generality that $k_1 \leq k_2$. An immediate application of the FM inequality~\eqref{eq:FM_definition} to the trajectories obtained with:
        \begin{align}
            &\begin{cases}
                u_{a,1}(\cdot) \triangleq u(\cdot+t^*) \\
            x_{a,1}(0)\triangleq x(t^*)
            \end{cases}\\
            &\mbox{and:} \nonumber\\
            &\begin{cases}
                u_{b,1}(\cdot) \triangleq u(\cdot+t^*+(k_2-k_1)T)\; [= u(\cdot+t^*)]\\
            x_{b,1}(0)\triangleq x(t^*+(k_2-k_1)T)
            \end{cases}
        \end{align}
    yields:
    \begin{align}
        &\sup_{\delta \in [0,T)}\|y(t^*+k_1T+\delta)-y(t^*+k_2T+\delta) \| \\
         & \leq\,\sup_{\delta \in [0,T)}\beta\big(\| x_{a,1}(0)-x_{b,1}(0) \|,k_1T+\delta\big)  \nonumber\\
        & +\gamma\big(\operatorname{ess\,sup}_{s\in[0,k_1T+\delta]} w(t_1-s) \| u_{a,1}(s)-u_{a,2}(s)\| \big) \nonumber \\
         &\leq\,\sup_{\delta \in [0,T)}\beta\big(\| x(t^*)-x(t^*+(k_2-k_1)T) \|,k_1T+\delta\big)   \nonumber\\
        & +\gamma\big(\operatorname{ess\,sup}_{s\in[t^*,t_1]} w(t_1-s) \underbrace{\| u(s)-u(s+(k_2-k_1)T)\|}_{=0} \big) \nonumber \\
        &\leq\, \beta\big(2M_x,k_1T\big) \to_{k_1 \to +\infty} 0.
    \end{align}
   Therefore, given that the output $y \in L_{\infty}([0,+\infty),\mathbb{R}^{n_y})$, the sequence $\left(y(t^*+kT+\cdot )\right)_{k \in \mathbb{N}}$ is a Cauchy sequence in the complete metric space $L_{\infty}([0,T), \mathbb{R}^{n_y})$, so it converges to $y^*\in L_{\infty}([0,T), \mathbb{R}^{n_y})$. Now, let us show the uniqueness of $y^*$ with respect to the initial condition by considering the trajectories associated with $x_{a,2}(0)$ and $x_{b,2}(0) \in \mathcal{X}_0$. Again, the FM inequality~\eqref{eq:FM_definition} yields for all $k\in \mathbb{N}$:
    \begin{align}
        &\sup_{\delta \in [0,T)}\|y_{a,2}(t^*+kT+\delta)-y_{b,2}(t^*+kT+\delta) \| \\
         &\leq\,\beta\big(\|x_{a,2}(t^*)-x_{b,2}(t^*)  \|,kT\big) \to_{k \to +\infty} 0.   \nonumber
    \end{align}
    This shows that all output sequences $\left(y(t^*+kT+\cdot )\right)_{k \in \mathbb{N}}$ converge to the same limit $y^*$, regardless of the initial conditions. Finally, considering the $T$-periodic function $y_p\in  L_{\infty}([0,+\infty),\mathbb{R}^{n_y})$ defined by $y_p(t)\triangleq y^*(t-\lfloor t/T \rfloor T)$, this implies $\lim_{t\to +\infty} \| y(t^*+t)-y_p(t) \| = 0$.
\end{proof}

\begin{remark}
    Proposition~\ref{prop:PIPO} can also be established from $\delta$IOS alone rather than state-space FM.
\end{remark}

\begin{remark}
    If the system has input-to-state FM (Definition~\ref{def:ISSFM}), Proposition~\ref{prop:CICO} and Proposition~\ref{prop:PIPO} are exactly Proposition~4.5 and Proposition~4.4 of~\cite{Angeli2002ALA}. The proof of Proposition~\ref{prop:CICO} is in fact relatively similar to that of Proposition~4.4 of~\cite{Angeli2002ALA}. An interesting corollary of Proposition~\ref{prop:PIPO} in this case is that, for constant inputs, all solutions converge to an equilibrium.
\end{remark}

We conclude this section with an evident, yet compelling, application of the memory kernel: it enables the derivation of explicit time-varying constraints on $\|\Delta u(t)\|$ to achieve a prescribed $\|\Delta y(t)\|$ at a given time horizon $t^*$. This slightly refines the ultimate-bound property of $\delta$IOS by replacing the static bounds on past perturbations with less conservative time-varying ones.
\begin{proposition} \label{prop:bounds}
    Let $t^*, r>0$, and let the system \eqref{eq:StateSpace} have $w$-FM (Definition~\ref{def:FM}) with $w>0$ pointwise. If two inputs $u_a,u_b\in \mathcal{U}^+$ satisfy the following constraint:
    \begin{equation}
        \| u_a(t)-u_b(t)\| \leq  \begin{cases} \gamma^{-1}(r)/w(t^*-t) & \mbox{ if } t\in[0,t^*]\\
        \gamma^{-1}(r) &\mbox{ if } t>t^* \end{cases}
    \end{equation}
    then the mismatch between $y_a$ and $y_b$ at time $t\geq t^*$ satisfies:
    \begin{equation}
        \| y_a(t)-y_b(t)\| \leq \beta\big(\| x_a(0)-x_b(0) \|,t\big)  +r.
    \end{equation}
\end{proposition}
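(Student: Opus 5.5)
The plan is to apply the FM inequality \eqref{eq:FM_definition} of Definition~\ref{def:FM} directly at any time $t\geq t^*$. Since the $\beta$ term already matches the claimed bound, it suffices to show that
\begin{equation*}
\operatorname{ess\,sup}_{s\in[0,t]} w(t-s)\,\| u_a(s)-u_b(s)\| \leq \gamma^{-1}(r).
\end{equation*}
Applying $\gamma$, which is strictly increasing, then gives a term at most $\gamma(\gamma^{-1}(r))=r$. Here $\gamma^{-1}$ is well defined on $[0,+\infty)$ because $\gamma\in\mathcal{K}_\infty$ is a continuous bijection of $[0,+\infty)$ onto itself; this is also why the constraint in the statement makes sense for every $r>0$.

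To bound the essential supremum, I will split the interval $[0,t]$ into $[0,t^*]$ and $(t^*,t]$. On $[0,t^*]$, we have $t-s\geq t^*-s\geq 0$, so the nonincreasing property of the memory kernel (Definition~\ref{def:memorykernel}) gives $w(t-s)\leq w(t^*-s)$. Combined with the first branch of the constraint, this yields, for almost every such $s$,
\begin{equation*}
w(t-s)\,\|u_a(s)-u_b(s)\| \leq w(t^*-s)\,\frac{\gamma^{-1}(r)}{w(t^*-s)} = \gamma^{-1}(r).
\end{equation*}
This step uses the hypothesis $w>0$ pointwise so that the division is legitimate. On $(t^*,t]$, I will simply use $w\leq 1$ together with the second branch of the constraint to get the same bound $\gamma^{-1}(r)$. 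The essential supremum over $[0,t]$ is at most the maximum of the essential suprema over the two pieces, so it is bounded by $\gamma^{-1}(r)$. Substituting into \eqref{eq:FM_definition} concludes.

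There is no real obstacle; the only delicate point is bookkeeping. The constraint need only hold almost everywhere, which is all that the essential supremum requires. The key monotonicity step $w(t-s)\leq w(t^*-s)$ relies on $t\geq t^*$, which explains why the conclusion is claimed only for times $t\geq t^*$.
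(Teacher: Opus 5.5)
Your proof is correct and follows the paper's argument essentially verbatim. You split $[0,t]$ at $t^*$, bound the later piece using $w\leq 1$ with the second branch of the constraint, and bound $[0,t^*]$ using $w(t-s)\leq w(t^*-s)$ (the paper writes this as $w(t-s)/w(t^*-s)\leq 1$) with the first branch, which gives $\gamma^{-1}(r)$ before applying $\gamma$.
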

\begin{proof}
    Again, the proof is immediate by upper-bounding \eqref{eq:FM_definition} in Definition~\ref{def:FM}, with $t\geq t^*$. Succinctly:
    \begin{align}
        &\operatorname{ess\,sup}_{s\in[0,t]} w(t-s)\|\Delta u(s)\| \\
        &\leq \max\left( \operatorname{ess\,sup}_{s\in[0,t^*]} w(t-s)\|\Delta u(s)\|, \right. \nonumber\\
        &\;\;\left. \operatorname{ess\,sup}_{s\in[t^*,t]}\|\Delta u(s)\|\right) \nonumber \\
        &\leq \max\left( \operatorname{ess\,sup}_{s\in[0,t^*]} \frac{w(t-s)}{w(t^*-s)}\gamma^{-1}(r), \gamma^{-1}(r)\right).
    \end{align}
    Since $w$ is taken nonincreasing, $w(t-s)/ w(t^*-s)\leq 1$ for all $t\geq t^*$ and $s\in[0,t^*]$, so:
    \begin{equation}
        \operatorname{ess\,sup}_{s\in[0,t]} w(t-s)\|\Delta u(s)\|  \leq \gamma^{-1}(r),
    \end{equation}
    which concludes the proof.
\end{proof}

\section{Connection to Boyd and Chua's FM} \label{sec:BoydFM}

Boyd and Chua's definition of FM was developed for time-invariant and causal I/O operators $G:\mathcal{U} \to \mathcal{Y}$ with $\mathcal{U} \subset L_{\infty}(\mathbb{R},\mathbb{R}^{n_u})$ and  $\mathcal{Y} \subset L_{\infty}(\mathbb{R},\mathbb{R}^{n_y})$. In this context, the inputs and outputs are required to be defined over the entire time axis (including negative time), as recalled in the following definitions of time-invariance and causality.

\begin{definition}[Time invariance] Let us denote by $T_{\tau}:\mathcal{L}_{\infty} \to \mathcal{L}_{\infty}$ the time translation defined by $T_{\tau} f (t) \triangleq f(t-\tau)$. The operator $G:\mathcal{U} \to \mathcal{Y}$ is time-invariant if for all $u\in \mathcal{U}$ and $\tau\in\mathbb{R}$:
    \begin{align}
        &T_{\tau}u \in \mathcal{U},  & [\mbox{input set invariance}] \label{eq:inputsetinvariance}\\
        &G(T_{\tau}u) = T_{\tau}G(u).  & [\mbox{operator equivariance}] \label{eq:operatorinvariance}
    \end{align}
\end{definition}

\begin{remark}
    The operator $G:\mathcal{U} \to \mathcal{Y}$ is equivariant with respect to the group of translation of the real line, so the time-invariance property is in fact a time-equivariance property. However, this terminology is not commonly used by the literature~\cite{Donchev2025}.
\end{remark}

\begin{definition}[Causality] The operator $G:\mathcal{U} \to \mathcal{Y}$ is time causal if for all $u_a,u_b \in \mathcal{U}$ and $t_0\in \mathbb{R}$:
    \begin{equation}
        u_a(t) = u_b(t), \forall t \leq t_0 \Rightarrow  G(u_a)(t) = G(u_b)(t) , \forall t \leq t_0. \label{eq:cond1timecausality}
    \end{equation}
\end{definition}

\begin{remark}
    Intuitively, this definition states that at each instant $t_0$, the output is solely determined by the past of the input, not its future.
\end{remark}

It is relatively well-known that this causality condition can be further simplified in the context of time-invariance.

\begin{proposition}[Causality under time-invariance~\cite{Boyd1985}] A time-invariant operator $G:\mathcal{U} \to \mathcal{Y}$ is time causal if for all $u_a,u_b \in \mathcal{U}$:
\begin{equation}
        u_a(t) = u_b(t), \forall t \leq 0 \Rightarrow  G(u_a)(t) = G(u_b)( t) , \forall t \leq 0. \label{eq:cond2timecausality}
\end{equation}
\end{proposition}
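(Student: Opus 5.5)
The plan is to show that, for a time-invariant $G$, condition \eqref{eq:cond2timecausality} implies the general causality condition \eqref{eq:cond1timecausality}. The idea is to shift the cut-off time $t_0$ back to the origin using time translation, apply the hypothesis there, and then shift back using operator equivariance \eqref{eq:operatorinvariance}.

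Fix $u_a,u_b\in\mathcal{U}$ and $t_0\in\mathbb{R}$, and assume $u_a(t)=u_b(t)$ for all $t\le t_0$. Define the shifted inputs $\tilde u_a \triangleq T_{-t_0}u_a$ and $\tilde u_b \triangleq T_{-t_0}u_b$, so that $\tilde u_a(t)=u_a(t+t_0)$.

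\begin{enumerate}
\item By the input set invariance \eqref{eq:inputsetinvariance}, both $\tilde u_a$ and $\tilde u_b$ belong to $\mathcal{U}$.
\item For $t\le 0$ we have $t+t_0\le t_0$, hence $\tilde u_a(t)=\tilde u_b(t)$ for all $t\le 0$.
\item Condition \eqref{eq:cond2timecausality} then gives $G(\tilde u_a)(t)=G(\tilde u_b)(t)$ for all $t\le 0$.
\item By equivariance \eqref{eq:operatorinvariance}, $G(\tilde u_a)=T_{-t_0}G(u_a)$, that is, $G(\tilde u_a)(t)=G(u_a)(t+t_0)$, and the same holds for $u_b$.
\item Therefore $G(u_a)(t+t_0)=G(u_b)(t+t_0)$ for all $t\le 0$. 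Substituting $s=t+t_0$ yields $G(u_a)(s)=G(u_b)(s)$ for all $s\le t_0$, which is exactly \eqref{eq:cond1timecausality}.
\end{enumerate}

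The argument is only a couple of lines, and the main thing needing care is the sign convention of $T_\tau$. Since $T_\tau f(t)=f(t-\tau)$, the translation $T_{-t_0}$ advances signals by $t_0$. One must check that equivariance is applied with $\tau=-t_0$, which the definition permits because $\tau$ ranges over all of $\mathbb{R}$.

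A secondary point is that signals in $L_\infty$ are equivalence classes, so the equalities in \eqref{eq:cond1timecausality} and \eqref{eq:cond2timecausality} hold almost everywhere. A translation maps null sets to null sets, so every step above remains valid almost everywhere.

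The converse direction, that general causality implies \eqref{eq:cond2timecausality}, is trivial: take $t_0=0$. So the two conditions are in fact equivalent for time-invariant operators.
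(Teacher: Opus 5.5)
Your proof is correct, including the sign of the shift: $T_{-t_0}$ moves the cut-off to the origin. Input-set invariance \eqref{eq:inputsetinvariance} keeps the shifted signals in $\mathcal{U}$, you apply \eqref{eq:cond2timecausality} at $0$, and equivariance \eqref{eq:operatorinvariance} with $\tau=-t_0$ shifts back to give \eqref{eq:cond1timecausality}. The paper does not prove this itself but cites Boyd and Chua. Your translation argument is the standard one, and the paper uses the same mechanism in the second half of the proof of Proposition~\ref{prop:unique} to carry equality at time $0$ to every time $\tau$.
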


The properties of time-invariance and causality mean that $G$ can solely be defined and studied through a \textit{memory functional}, which maps its input before time $t=0$ to its output at time  $t=0$~\cite{Boyd1985}. We introduce the set $\mathcal{U}^- \triangleq \{u^-,  u \in \mathcal{U} \}$, where $u^-$ stands for the restriction of the function $u$ to negative time. $\mathcal{U}^+$ and $u^+$ are defined likewise for positive time. A definition for the memory functional of an operator is defined thereafter.

\begin{definition}[Memory functional] Given the operator $G:\mathcal{U} \to \mathcal{Y}$, the map $F_G : \mathcal{U}^- \to\mathbb{R}^{n_y}$ is a memory functional of $G$ if for all $u^-\in \mathcal{U}^-$, there exists $\tilde{u} \in \mathcal{U}$ such that $u^-=\tilde{u}^-$ and $F_G(u^-) =G(\tilde{u})(0)$.
\end{definition}

At this stage, the definition of a memory functional remains underspecified: the memory functional of an operator is not uniquely defined, nor is the operator uniquely determined by its memory functional. While such memory functionals are not useful in themselves, they serve to demonstrate that time-invariance and causality constitute the minimal assumptions required to guarantee a convenient correspondence between an operator and its associated memory functional. This observation is purely rhetorical in nature.

\begin{proposition} \label{prop:unique} If $G:\mathcal{U} \to \mathcal{Y}$ is causal, then $G$ defines a unique memory functional $F_G : \mathcal{U}^- \to\mathbb{R}^{n_y}$. Conversely, if $G:\mathcal{U} \to \mathcal{Y}$ is causal and time-invariant, then the memory functional $F_G : \mathcal{U}^- \to\mathbb{R}^{n_y}$ uniquely defines $G$.
\end{proposition}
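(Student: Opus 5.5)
The plan splits into the two claims of Proposition~\ref{prop:unique}: first that a causal operator yields a well-defined memory functional, then that time-invariance and causality together let the memory functional reconstruct $G$.

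\emph{Existence and uniqueness under causality.} Define $F_G(u^-) \triangleq G(\tilde u)(0)$ for any $\tilde u\in\mathcal{U}$ with $\tilde u^- = u^-$. Such a $\tilde u$ exists by the definition of $\mathcal{U}^-$. The only issue is independence of the choice of extension. If $\tilde u_a,\tilde u_b\in\mathcal{U}$ both restrict to $u^-$, then $\tilde u_a(t)=\tilde u_b(t)$ for all $t\le 0$. Causality~\eqref{eq:cond1timecausality} with $t_0=0$ then gives $G(\tilde u_a)(t)=G(\tilde u_b)(t)$ for all $t\le 0$, in particular at $t=0$. So $F_G$ is well defined. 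Any memory functional of $G$ in the sense of the definition must coincide with it, since every admissible value $G(\tilde u)(0)$ is the same. This gives uniqueness.

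\emph{$F_G$ determines $G$ under causality and time-invariance.} Take $u\in\mathcal{U}$ and $t\in\mathbb{R}$. Input set invariance~\eqref{eq:inputsetinvariance} gives $T_{-t}u\in\mathcal{U}$. Operator equivariance~\eqref{eq:operatorinvariance} then yields
\begin{equation*}
G(u)(t) = \big(T_{-t}G(u)\big)(0) = G(T_{-t}u)(0) = F_G\big((T_{-t}u)^-\big),
\end{equation*}
where the last equality is the well-definedness established above. Hence $G(u)(t)$ is expressed entirely through $F_G$ and the restriction of $u$ to $(-\infty,t]$. Consequently, two causal time-invariant operators $G_1,G_2$ on $\mathcal{U}$ with $F_{G_1}=F_{G_2}$ agree at every $u$ and every $t$. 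This is the sense in which $F_G$ uniquely defines $G$.

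The main obstacle is a matter of care rather than depth. First, the evaluation $G(u)(0)$ must make sense pointwise even though $\mathcal{Y}\subset L_\infty$ consists of equivalence classes. I would either assume outputs have continuous representatives, or interpret the identities almost everywhere in $t$; the latter is handled by the shift identity above holding for every $t$. Second, the shift sign convention must be checked: with $T_\tau f(t)=f(t-\tau)$, we have $T_{-t}f(0)=f(t)$.
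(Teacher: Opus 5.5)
Your proposal is correct and follows essentially the same route as the paper. Causality at $t_0=0$ forces any two extensions of $u^-$, and hence any two memory functionals, to agree at time $0$; input-set invariance and equivariance then give $G(u)(t)=G(T_{-t}u)(0)=F_G((T_{-t}u)^-)$, so two causal time-invariant operators with the same memory functional coincide. Your explicit well-definedness construction of $F_G$ and your caveat about evaluating $L_\infty$ outputs at a single point are small additions that the paper leaves implicit.
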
 
\begin{proof} First, let us show that a causal $G$ uniquely defines $F_G$. Let $F_G^a$ and $F_G^b$ be two memory functionals associated with $G$. Let $u^-\in \mathcal{U}^{-}$ and take $u_a,u_b\in \mathcal{U}$ such that $u^-=u_a^-=u_b^-$, $F_G^a(u^-)=G(u_a)(0)$ and $F_G^b(u^-)=G(u_b)(0)$. By definition of causality \eqref{eq:cond1timecausality}, $G(u_a)(t)=G(u_b)(t)$ for all $t\leq 0$, so in particular, $G(u_a)(0)=G(u_b)(0)$, which implies that $F_G^a(u_a^-) = F_G^b(u_b^-)$, and thus $F_G^a(u^-)=F_G^b(u^-)$ for all $u^-\in \mathcal{U}^{-}$, i.e. $F_G^a=F_G^b$.

Conversely, let us show that if $G$ is causal and time-invariant, then $F_G$ uniquely defines $G$. Let us take $F_{G_a}$ and $F_{G_b}$ the unique memory functionals associated with two time invariant and causal operators $G_a$ and $G_b$. Assuming $F_{G_a}=F_{G_b}$, we have $G_a(u)(0)= G_b(u)(0)$ for all $u\in \mathcal{U}$. By the input set invariance \eqref{eq:inputsetinvariance}, if $u \in \mathcal{U}$, then for all $\tau \in \mathbb{R}$, we have $T_{-\tau}u \in \mathcal{U}$, hence $G_a(T_{-\tau}u)(0)= G_b(T_{-\tau}u)(0)$, which is to say, by the operator equivariance \eqref{eq:operatorinvariance}, $G_a(u)(\tau)= G_b(u)(\tau)$ for all $u\in \mathcal{U}$ and $\tau \in \mathbb{R}$, and thus $G_a=G_b$.
\end{proof}

Boyd and Chua's definition of FM relies on this memory functional: an operator $G$ has FM if its memory functional $F_G$ is continuous with respect to the $w$-fading input norm. Boyd and Chua's definition is stated hereafter.

\begin{definition}[FM operator~\cite{Boyd1985}] \label{def:OperatorFM} Let $G:\mathcal{U} \to \mathcal{Y}$ be a causal and time-invariant operator. $G$ has FM if there exists a memory kernel $w$ satisfying Definition~\ref{def:memorykernel} such that for all $u_a,u_b\in \mathcal{U}^-$, the memory functional $F_G : \mathcal{U}^- \to\mathbb{R}^{n_y}$ satisfies:
    \begin{align}
       \forall \varepsilon >0, \exists \delta >0,\;\; &\operatorname{ess\,sup}_{s\leq 0} w(-s)\| u_a(s)-u_b(s)\| <\delta \\
       &\Rightarrow \| F_G(u_a)-F_G(u_b) \| < \varepsilon. \nonumber
    \end{align}
\end{definition}

The state-space definition of FM introduced in Section~\ref{ssec:FMDef} is slightly more contrived than that of Boyd and Chua. Succinctly, the operator theoretic equivalent to our state-space FM definition consists in imposing \textit{uniform continuity} to the FM operator rather than solely continuity. Although imposing uniform continuity is stronger than continuity, the two notions of FM become identical for \textit{well-behaved inputs}, by which we mean inputs that are both Lipschitz-continuous and equibounded. Thus, we argue that the two notions are fundamentally equivalent in practice.

\begin{definition}[Uniform FM operator] \label{def:OperatorUniformFM} Let $G:\mathcal{U} \to \mathcal{Y}$ be a causal and time-invariant operator. $G$ has uniform FM if there exists $\gamma\in \mathcal{K}_{\infty}$ and a memory kernel $w$ satisfying Definition~\ref{def:memorykernel} such that for all $u_a,u_b\in \mathcal{U}^-$, the memory functional $F_G : \mathcal{U}^- \to\mathbb{R}^{n_y}$ satisfies:
\begin{align} \label{eq:FM_inequality}
 \| F_G(u_a)-F_G(u_b) \| \leq \gamma\left (\operatorname{ess\,sup}_{s\leq 0} w(-s) \| u_a(s)-u_b(s)\| \right).
\end{align}
\end{definition}
\begin{remark}
    In this context, $\gamma$ is the modulus of continuity of the memory functional $F_G$.
\end{remark}

\begin{remark}
    Similarly to the memory kernel discussed in Section~\ref{ssec:FMProp}, the memory kernel introduced in the two definitions hereabove is taken continuous and nonincreasing without loss of generality.
\end{remark}

\begin{proposition}[Well behaved $\mathcal{U}$: (FM operator $\Leftrightarrow$ Uniform FM operator)] \label{prop:FM_UniformFM} Let $G:\mathcal{U} \to \mathcal{Y}$ be a causal and time-invariant operator. If there exists $M,K>0$ such that:
\begin{equation}
    \mathcal{U} \subseteq \{u : \|u(t)\|\leq M, \|u(t)-u(s)\| \leq K |t-s|, \forall t,s \in \mathbb{R}\},
\end{equation}
then $G$ has FM {(Definition~\ref{def:OperatorFM})} if and only if it has uniform FM {(Definition~\ref{def:OperatorUniformFM})}.
\end{proposition}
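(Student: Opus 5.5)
The direction ``uniform FM $\Rightarrow$ FM'' is immediate and I would dispatch it first. Given $\varepsilon>0$, set $\delta \triangleq \gamma^{-1}(\varepsilon)$, which is well defined since $\gamma\in\mathcal{K}_\infty$. If $\operatorname{ess\,sup}_{s\leq 0} w(-s)\|u_a(s)-u_b(s)\|<\delta$, then \eqref{eq:FM_inequality} and strict monotonicity of $\gamma$ give $\|F_G(u_a)-F_G(u_b)\|<\varepsilon$. This direction uses neither $M$ nor $K$.

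For ``FM $\Rightarrow$ uniform FM'', I would follow the classical Boyd--Chua route through compactness. Write $\|v\|_w \triangleq \operatorname{ess\,sup}_{s\leq 0} w(-s)\|v(s)\|$ for the fading norm on $(-\infty,0]$. The key step is to show that $\mathcal{U}^-$ is relatively compact, with respect to $\|\cdot\|_w$, inside the set $\mathcal{K}_{M,K}$ of $M$-bounded, $K$-Lipschitz functions on $(-\infty,0]$, and that $\mathcal{K}_{M,K}$ is itself compact for $\|\cdot\|_w$. I would prove this by an Arzelà--Ascoli plus tail argument. Given $\eta>0$, pick $T$ with $w(T)\le \eta/(2M)$, which is possible by \eqref{eq:MemoryKernelLimit}. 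Then for any $u_a,u_b\in\mathcal{K}_{M,K}$ the contribution of $s\le -T$ to $\|u_a-u_b\|_w$ is at most $\eta$, because $w$ is nonincreasing and $\|u_a-u_b\|\le 2M$. On $[-T,0]$ the family is equibounded and equi-Lipschitz, so Arzelà--Ascoli gives a finite $\eta$-net in the sup norm, and therefore in $\|\cdot\|_w$ since $w\le 1$. This yields total boundedness. Closedness of $\mathcal{K}_{M,K}$ follows because $w>0$ on compacts, so $\|\cdot\|_w$-convergence implies locally uniform convergence, and that preserves both the bound $M$ and the Lipschitz constant $K$.

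Next I would invoke the Heine--Cantor theorem: a continuous map on a compact metric space is uniformly continuous. From this I would define the modulus
\[
\omega(r)\triangleq \sup\{\|F_G(u_a)-F_G(u_b)\| : u_a,u_b\in\mathcal{U}^-,\ \|u_a-u_b\|_w\le r\}.
\]
It is nondecreasing, satisfies $\omega(r)\to 0$ as $r\to 0^+$ by uniform continuity, and is finite because $F_G$ is bounded on a compact set. Moreover $\omega(r)$ is constant for $r\ge 2M$, since $\|u_a-u_b\|_w\le 2M$ always. A standard regularization, such as Lemma~2 of \cite{Kellett2014} as already used in Corollary~\ref{cor:MemoryKernelComp}, then produces a continuous class-$\mathcal{K}$ majorant of $\omega$. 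Adding the identity $r$ makes it $\mathcal{K}_\infty$, and this gives the required $\gamma$ in \eqref{eq:FM_inequality}. If desired, $w$ is kept continuous and nonincreasing as in the remark following Definition~\ref{def:OperatorUniformFM}.

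The main obstacle is the domain of $F_G$. Continuity is only given on $\mathcal{U}^-$, which need not be closed in $\mathcal{K}_{M,K}$, and a function that is continuous on a merely totally bounded set need not be uniformly continuous. I would handle this in one of two ways. The first is to read the hypothesis, as Boyd and Chua do, with $\mathcal{U}$ equal to the full well-behaved set, so that $\mathcal{U}^-=\mathcal{K}_{M,K}$ is compact. This set is indeed invariant under translation, as \eqref{eq:inputsetinvariance} requires. Otherwise, I would state explicitly that $\mathcal{U}^-$ is assumed closed under $\|\cdot\|_w$, so that it is compact as a closed subset of a compact set, after which Heine--Cantor applies directly.
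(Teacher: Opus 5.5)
Your argument is the paper's own. The paper cites Boyd and Chua's Lemma~1 for compactness of the bounded-Lipschitz inputs in the $w$-fading norm and then applies Heine--Cantor. You reprove that compactness via Arzel\`a--Ascoli plus a tail estimate, and you also make explicit the passage from uniform continuity to a $\mathcal{K}_\infty$ modulus $\gamma$.

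The obstacle you flag is genuine, and it is exactly where the paper's one-line proof is too quick. Being \emph{contained} in a compact set does not give uniform continuity, and with $\subseteq$ the statement is in fact false. Take $\mathcal{U}$ to be the constant signals $u\equiv c$ with $c\in(0,1]$, and set $G(u)\equiv\sin(1/c)$. This operator is causal, time-invariant and FM with $w(t)=e^{-t}$, but it admits no $\gamma\in\mathcal{K}_\infty$ as in \eqref{eq:FM_inequality}. So your extra hypothesis is necessary, not merely convenient: either take $\mathcal{U}$ equal to the full bounded-Lipschitz set, as Boyd and Chua do, or assume $\mathcal{U}^-$ closed in the fading norm.

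One small repair is needed. Memory kernels in this paper may vanish, so your step using $w>0$ on compacts is not justified as written. Before it, replace $w$ by $\max(w(t),e^{-t})$. This is still a memory kernel and preserves FM, because a larger kernel yields a larger fading norm.
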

\begin{proof}
    Following the proof of Lemma~1 in Section~4 of~\cite{Boyd1985}, we can show that $\mathcal{U}^-$ is contained in a compact set with respect to the $w$-fading input norm. The equivalence between FM and uniform FM then follows from the Heine-Cantor theorem, which demonstrates the equivalence between continuity and uniform continuity {inside compact sets.}
\end{proof}
We now show the equivalence between the operator-theoretic notion of uniform FM {(Definition~\ref{def:OperatorUniformFM})} and the state-space notion of FM introduced earlier {(Definition~\ref{def:FM})}. To this aim, we need to associate a fixed initial condition $x_0 \in \mathcal{X}_0$ of the state-space model \eqref{eq:StateSpace} with a fixed negative-time input $u_0 \in \mathcal{U}^-$ of the I/O operator $G$. The formal definition of a $(x_0,u_0)$-realization is introduced hereafter.

\begin{definition}[$(x_0,u_0)$-realization] The causal and time-invariant operator $G:\mathcal{U} \to \mathcal{Y}$ is a $(x_0,u_0)$-realization of the state-space model \eqref{eq:StateSpace} if the negative-time input $u_0 \in \mathcal{U}^-$ satisfies:
\begin{itemize}
    \item for all $u \in \mathcal{U}^+$, the temporal concatenation $u_0 \diamond u$ is an element of $\mathcal{U}$;
    \item the output $y$ of \eqref{eq:StateSpace} considered at $x_0 \in \mathcal{X}_0$ and associated with $u \in \mathcal{U}^+$ satisfies $y(t) = G (u_0\diamond u)(t)$ for all  $t\geq 0$.
\end{itemize}
\end{definition}

\begin{remark}
   Note that $G (u_0\diamond u)(t) = F_G((T_{-t}(u_0 \diamond u))^-)$.
\end{remark}

The equivalence between the operator-theoretic notion of uniform FM and the state-space notion of FM can then be stated as follows:

\begin{theorem}[Uniform FM operator $\Leftrightarrow$ State-space FM] \label{th:equivOpeSS} Let $G:\mathcal{U} \to \mathcal{Y}$ be a $(x_0,u_0)$-realization of the state-space model \eqref{eq:StateSpace}. The operator $G$ has uniform $w$-FM {(Definition~\ref{def:OperatorUniformFM})} for all inputs of the form $T_{-t}(u_0 \diamond u)$, with $u\in\mathcal{U^+}$ and $t\geq 0$, if and only if the state-space model~\eqref{eq:StateSpace} has $w$-FM {(Definition~\ref{def:FM})} for $\mathcal{X}_0 = \{x_0\}$.
\end{theorem}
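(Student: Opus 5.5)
With $\mathcal{X}_0=\{x_0\}$ the state-space inequality \eqref{eq:FM_definition} only ever compares trajectories with $x_a(0)=x_b(0)=x_0$. Since $\beta(0,t)=0$, the $\beta$ term disappears. State-space $w$-FM on $\{x_0\}$ is therefore equivalent to
\begin{equation*}
\|y_a(t)-y_b(t)\| \leq \gamma\Big(\operatorname{ess\,sup}_{s\in[0,t]} w(t-s)\|u_a(s)-u_b(s)\|\Big)
\end{equation*}
for all $u_a,u_b\in\mathcal{U}^+$ and all $t\geq 0$. The plan is to show that this inequality and \eqref{eq:FM_inequality}, restricted to inputs of the form $T_{-t}(u_0\diamond u)$, are literally the same statement. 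This rests on two identities, one for the outputs and one for the fading norm. The same pair $(\gamma,w)$ will serve in both directions.

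\emph{Step 1 (outputs).} By the definition of a $(x_0,u_0)$-realization and the remark following it, $y_i(t)=G(u_0\diamond u_i)(t)=F_G\big((T_{-t}(u_0\diamond u_i))^-\big)$ for $i\in\{a,b\}$. Hence the left-hand sides coincide:
\begin{equation*}
\|y_a(t)-y_b(t)\| = \|F_G(v_a)-F_G(v_b)\|, \qquad v_i\triangleq (T_{-t}(u_0\diamond u_i))^- .
\end{equation*}

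\emph{Step 2 (fading norm).} For $s\leq 0$ we have $v_i(s)=(u_0\diamond u_i)(s+t)$. When $s<-t$ this equals $u_0(s+t)$ for both $i$, so the difference vanishes. When $s\in[-t,0]$ it equals $u_i(s+t)$. Substituting $r=s+t$ gives
\begin{equation*}
\operatorname{ess\,sup}_{s\leq 0} w(-s)\|v_a(s)-v_b(s)\| = \operatorname{ess\,sup}_{r\in[0,t]} w(t-r)\|u_a(r)-u_b(r)\|.
\end{equation*}
The single boundary point $s=-t$ is a null set and does not affect the essential supremum.

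\emph{Step 3 (both directions).} Combining Steps 1 and 2, \eqref{eq:FM_inequality} applied to $(v_a,v_b)$ is exactly the reduced state-space inequality at time $t$ for $(u_a,u_b)$. This gives "$\Rightarrow$". For "$\Leftarrow$", any pair of admissible inputs sharing the same shift $t$ is of this form, so the state-space inequality yields \eqref{eq:FM_inequality}.

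The main obstacle I expect is the precise reading of "for all inputs of the form $T_{-t}(u_0\diamond u)$". The theorem could also be read as allowing the two compared inputs to carry different shifts $t_1<t_2$. I would first adopt the reading in which both inputs share a common $t$, which is natural because $G(u_0\diamond u)(t)$ compares outputs at the same instant. If pairs with different shifts are required, I would try to rewrite $T_{-t_1}(u_0\diamond u_a)$ as $T_{-t_2}(u_0\diamond\tilde u_a)$. Here $\tilde u_a$ is the concatenation of the final segment of $u_0$ of length $t_2-t_1$ with $u_a$. This needs $\tilde u_a\in\mathcal{U}^+$, and it reduces to a common-$t$ comparison only when the trajectory of \eqref{eq:StateSpace} from $x_0$ under that segment of $u_0$ returns to $x_0$. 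I would state explicitly which of these two conventions is used.
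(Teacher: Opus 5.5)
Your proposal is correct and follows essentially the same route as the paper, whose proof consists of your two identities: the output identity $\|y_a(t)-y_b(t)\|=\|F_G((T_{-t}(u_0\diamond u_a))^-)-F_G((T_{-t}(u_0\diamond u_b))^-)\|$, and the change of variables turning the fading norm over $s\leq 0$ into the one over $[0,t]$, from which it concludes that the two inequalities are equivalent. You also make explicit two points the paper leaves implicit but which are consistent with its argument: that $\beta(0,t)=0$ removes the dissipation term on $\mathcal{X}_0=\{x_0\}$, and that both compared inputs share a common shift $t$.
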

\begin{proof}
 Since for all $u_a, u_b \in \mathcal{U}^+$ and $t\geq 0$, we have:
  \begin{align}
 &\gamma\left (\operatorname{ess\,sup}_{s\leq 0} w(-s) \| T_{-t}(u_0 \diamond u_a)(s)-T_{-t}(u_0 \diamond u_b)(s)\| \right) \\
 &=  \gamma\left (\operatorname{ess\,sup}_{s\leq t} w(t-s)  \| (u_0 \diamond u_a)(s)-(u_0 \diamond u_b)(s)\|  \right)  \nonumber \\
 &=  \gamma\left (\operatorname{ess\,sup}_{s\in[0,t]} w(t-s) \| u_a(s)-u_b(s)\| \right),
 \end{align}
 and 
 \begin{align}
      &\| y_a(t)-y_b(t) \| = {\| G (u_0 \diamond u_a)(t) -  G (u_0 \diamond u_b)(t) \|}
      \\&=\| F_G((T_{-t}(u_0 \diamond u_a))^-)-F_G((T_{-t}(u_0 \diamond u_b))^-) \|, \nonumber
 \end{align}
the inequalities \eqref{eq:FM_inequality} and \eqref{eq:FM_definition} are equivalent.
\end{proof}
Since the equivalence is obtained at a fixed $x_0$, it does not yet justify the $\mathcal{KL}$-dissipation term $\beta$ in Definition~\ref{def:FM}. An operator-theoretic justification for this term is provided hereafter, although the strict equivalence between the two FM notions is lost. Succinctly, if the I/O operator $G$ is a $(x_0,u_0)$-realization of the state-space model~\eqref{eq:StateSpace} for a whole a set of initial conditions $x_0\in\mathcal{X}_0$, with an appropriate negative-time control $u_0 \in \mathcal{U}^-$ varying uniformly continuously with respect to the choice of $x_0$, then a $\mathcal{KL}$-dissipation term $\beta$ can be recovered solely from the operator-theoretic Definition~\ref{def:OperatorUniformFM} of FM.

\begin{assumption} \label{ass:operator_statespace2} Let $\mathcal{U}^-$ be equipped with the usual $L_{\infty}$ norm. We assume that there exists a uniformly continuous map $\psi:\mathcal{X}_0  \to \mathcal{U}^-$, with $\alpha\in \mathcal{K}_{\infty}$ its modulus of continuity, such that for all $x_0\in \mathcal{X}_0$, the causal and time-invariant operator $G:\mathcal{U} \to \mathcal{Y}$ is a $(x_0,\psi(x_0))$-realization of the state-space model \eqref{eq:StateSpace} considered at $x_0 \in \mathcal{X}_0$.
\end{assumption}

\begin{remark}
    In the case where $\mathcal{X}_0$ is compact, the Heine-Cantor theorem guarantees that solely continuity of $\psi$ is required.
\end{remark}

As an evident example, we can consider once again the standard low-pass filter $\tau \dot{y}(t)=-y(t)+u(t)$ and its I/O operator realization $G$ defined for all $u\in \mathcal{U}$ and $t \in \mathbb{R}$ by:
\begin{equation} \label{eq:SimpleOperatorExample}
    G(u)(t)\triangleq \frac{1}{\tau}\int_{-\infty}^t e^{-(t-s)/\tau}u(s) ds.
\end{equation}
Taking $\psi(x_0) \triangleq 1_{\leq 0}x_0$ over the domain $\mathcal{X}_0 = \mathbb{R}$, with $1_{\leq 0}$ the constant negative-time function of value $1$, we can easily verify that $\psi$ is uniformly continuous, and that for all $u\in \mathcal{U}^+$ and $t\geq 0$:
\begin{align}
    &G (\psi(x_0)\diamond u)(t)= e^{-t/\tau}x_0+ \frac{1}{\tau}\int_{0}^t e^{-(t-s)/\tau}u(s)ds=y(t),
\end{align}
so Assumption~\ref{ass:operator_statespace2} holds globally.

Given this assumption, the $\mathcal{KL}$-dissipation of initial conditions is recovered as follows:

\begin{proposition}[Uniform FM operator $\Rightarrow$ State-space FM] \label{prop:UniFMimpliesSSFM}Under Assumption~\ref{ass:operator_statespace2}, if $G$ has uniform $w$-FM {(Definition~\ref{def:OperatorUniformFM})} for all inputs in $\bigcup_{t\geq 0}T_{-t}(\psi(\mathcal{X}_0) \diamond \mathcal{U}^+)$, then the state-space model~\eqref{eq:StateSpace} has $w$-FM {(Definition~\ref{def:FM})} on $\mathcal{X}_0$.
\end{proposition}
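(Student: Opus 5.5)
Take $x_a(0),x_b(0)\in\mathcal{X}_0$ and $u_a,u_b\in\mathcal{U}^+$, and set $u_{0,a}=\psi(x_a(0))$, $u_{0,b}=\psi(x_b(0))$. Assumption~\ref{ass:operator_statespace2} lets us write, for every $t\geq 0$,
\[
y_a(t)=F_G\big((T_{-t}(u_{0,a}\diamond u_a))^-\big),
\]
and similarly for $y_b(t)$. Both shifted inputs lie in $\bigcup_{t\geq 0}T_{-t}(\psi(\mathcal{X}_0)\diamond\mathcal{U}^+)$, so the uniform $w$-FM inequality \eqref{eq:FM_inequality} applies to them. As in the proof of Theorem~\ref{th:equivOpeSS}, I split the essential supremum over $s\leq t$ into the negative-time part $s<0$ and the part $s\in[0,t]$, and use $\max(a,b)\leq a+b$. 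This gives
\begin{align*}
\|y_a(t)-y_b(t)\| \leq \gamma\Big(&\operatorname{ess\,sup}_{s<0} w(t-s)\|u_{0,a}(s)-u_{0,b}(s)\| \\
&+\operatorname{ess\,sup}_{s\in[0,t]} w(t-s)\|u_a(s)-u_b(s)\|\Big).
\end{align*}

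Next I bound the negative-time term. Since $w$ is nonincreasing and $t-s\geq t$ for $s<0$, it is at most $w(t)\|u_{0,a}-u_{0,b}\|_{L_\infty}$. Uniform continuity of $\psi$ with modulus $\alpha$ then bounds it by $w(t)\,\alpha(\|x_a(0)-x_b(0)\|)$. I then apply the weak triangle inequality $\gamma(a+b)\leq\gamma(2a)+\gamma(2b)$, which holds for $\gamma\in\mathcal{K}_\infty$ because $\gamma(a+b)\leq\gamma(2\max(a,b))$. This yields \eqref{eq:FM_definition} with gain $\tilde\gamma(r)=\gamma(2r)\in\mathcal{K}_\infty$ and candidate dissipation term $\tilde\beta(r,t)=\gamma\big(2w(t)\alpha(r)\big)$.

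The main obstacle is checking that $\tilde\beta$ is of class $\mathcal{KL}$. Continuity holds, and $\tilde\beta(\cdot,t)$ is increasing with value $0$ at $0$. The map $t\mapsto\tilde\beta(r,t)$ is nonincreasing and tends to $0$ by \eqref{eq:MemoryKernelLimit}. However, if $w(t)=0$ for some finite $t$, then $\tilde\beta(\cdot,t)$ is identically zero, so it is not strictly increasing and not of class $\mathcal{K}$. To fix this, I replace $w$ in $\tilde\beta$ by $\bar w(t)\triangleq \max(w(t),e^{-t})$. This function is positive, continuous, nonincreasing and tends to $0$, and it dominates $w$, so the bound is preserved. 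The resulting
\[
\beta(r,t)\triangleq\gamma\big(2\bar w(t)\alpha(r)\big)
\]
is of class $\mathcal{KL}$, and Definition~\ref{def:FM} holds on $\mathcal{X}_0$ with the same memory kernel $w$.

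Two technical points must also be handled. First, the identity $G(u_0\diamond u)(t)=F_G((T_{-t}(u_0\diamond u))^-)$ from the preceding remark is what allows the realization outputs to be rewritten as memory functionals. Second, splitting the essential supremum at $s=0$ is legitimate because the concatenation $u_0\diamond u$ coincides with $u_0$ on negative time and with $u$ on positive time.
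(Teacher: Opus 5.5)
Your proposal is correct and follows the paper's proof. Both express the outputs through the memory functional via Assumption~\ref{ass:operator_statespace2}, split the weighted supremum at $s=0$, bound the negative-time part by $w(t)\,\alpha(\|x_a(0)-x_b(0)\|)$ using the monotonicity of $w$ and the modulus of $\psi$, and take a dissipation term of the form $\gamma(w(t)\alpha(r))$.

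The differences are minor and in your favour. First, the paper stays with the $\max$ form \eqref{eq:FM_definition_max}, so your factor-$2$ detour through $\gamma(a+b)\le\gamma(2a)+\gamma(2b)$ is unnecessary: $\gamma(\max(a,b))\le\gamma(a)+\gamma(b)$ already gives \eqref{eq:FM_definition} with the same $\gamma$. Second, your replacement of $w$ by $\max(w(t),e^{-t})$ inside $\beta$ fixes a point the paper glosses over. When $w$ vanishes at a finite time, $\gamma(w(t)\alpha(r))$ is not of class $\mathcal{KL}$, and your fix handles this while keeping the same memory kernel $w$ in the input term.
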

\begin{proof}
    Let $x_{0,a},x_{0,b} \in \mathcal{X}_0$ and $u_a, u_b \in \mathcal{U}^+$. The outputs of the state-space model~\eqref{eq:StateSpace} associated with these initial conditions and inputs are denoted $y_a$ and $y_b$ respectively. Under Assumption~\ref{ass:operator_statespace2}, the following equality holds for all $t\geq 0$:
    \begin{align}
        &\| y_a(t)-y_b(t) \|=\\
     & \| F_G((T_{-t}( \psi(x_{0,a}) \diamond u_a))^-)-F_G((T_{-t}( \psi(x_{0,b})\diamond u_b))^-) \|,  \nonumber
    \end{align}
    thus, if $G$ has uniform FM, then the following inequality stands for all $t\geq 0$:
\begin{align}
     &\| y_a(t)-y_b(t) \| \\
     &\leq \gamma\left ( \max \left( \operatorname{ess\,sup}_{s\leq 0} w(t-s)  \| \psi(x_{0,a})(s)-\psi(x_{0,b})(s)\|, \right. \right.  \nonumber\\
     &  \left.\left.\operatorname{ess\,sup}_{s\in(0,t]} w(t-s)  \| u_a(s)-u_b(s)\| \right)\right).
\end{align}
By uniform continuity of $\psi$, the inequality $\operatorname{ess\,sup}_{s\leq 0}  \| \psi(x_{0,a})(s)-\psi(x_{0,b})(s)\| \leq  \alpha(  \| x_{0,a}- x_{0,b}\| )$ holds, hence, for all $t\geq 0$:
\begin{align}
    &\| y_a(t)-y_b(t) \| \\
     & \leq \max  \left(\gamma\left( \operatorname{ess\,sup}_{s\leq 0} w(t-s)  \alpha(  \| x_{0,a}- x_{0,b}\| ) \right), \right. \nonumber\\
     &  \left.\gamma\left( \operatorname{ess\,sup}_{s\in(0,t]} w(t-s)  \| u_a(s)-u_b(s)\|\right)\right) \nonumber \\
     &\leq \max  \left(\gamma\left( \alpha(  \| x_{0,a}- x_{0,b}\| ) \right), \right.\\
     &  \left.\gamma\left( \operatorname{ess\,sup}_{s\in(0,t]} w(t-s)  \| u_a(s)-u_b(s)\| \right)\right). \nonumber
\end{align}
Without loss of generality, the memory kernel $w$ is taken continuous and nonincreasing, so we recover the FM inequality~\eqref{eq:FM_definition_max}, with $\beta(r,t) \triangleq \gamma\left( w(t) \alpha(r)\right)$.
\end{proof}

The purpose of Proposition~\ref{prop:UniFMimpliesSSFM} is primarily rhetorical rather than practical. Although it may be difficult to identify a set $\mathcal{X}_0$ on which Assumption~\ref{ass:operator_statespace2} holds in practice, its (most likely) existence for a well-selected I/O operator realization of the state-space model~\eqref{eq:StateSpace} is sufficient to motivate the $\mathcal{KL}$-dissipation of initial conditions in Definition~\ref{def:FM}.

The converse, however, will turn out to be much more practical, since it allows to access the operator-theoretic approximation theorem of Boyd and Chua~\cite{Boyd1985} from the state-space definition of FM over a whole set of initial conditions. However, once again because of the $\mathcal{KL}$-dissipation of initial conditions, this converse result must be handled carefully. Namely, we need the composition of the inverse map $\psi^{-1}:\psi(\mathcal{X}_0) \to \mathcal{X}_0$ with the $\mathcal{KL}$-dissipation $\beta$ to itself satisfy a FM inequality.

\begin{assumption} \label{ass:operator_statespace3} We assume that there exists a map $\psi:\mathcal{X}_0  \to \mathcal{U}^-$ such that for all $x_0\in \mathcal{X}_0$, the causal and time-invariant operator $G:\mathcal{U} \to \mathcal{Y}$ is a $(x_0,\psi(x_0))$-realization of the state-space model \eqref{eq:StateSpace} considered at $x_0 \in \mathcal{X}_0$. Moreover we assume that there exist $\tilde{\gamma}\in \mathcal{K}_{\infty}$ and a memory kernel $\tilde{w}$ satisfying Definition~\ref{def:memorykernel} such that the following FM inequality holds for all $x_{0,a},x_{0,b} \in \mathcal{X}_0$ and $t\geq 0$:
\begin{align}
 &\beta(\| x_{0,a}-x_{0,b} \|,t) \\
 &\leq  \tilde{\gamma}\left (\operatorname{ess\,sup}_{s\leq 0}\tilde{w}(t-s) \| \psi(x_{0,a})(s)-\psi(x_{0,b})(s)\| \right). \nonumber
\end{align}
\end{assumption}

In the context of the low-pass filter $\tau \dot{y}(t)=-y(t)+u(t)$ realized by the operator~\eqref{eq:SimpleOperatorExample}, one can consider $\psi(x_0)(t) \triangleq 3 e^{t/2\tau}x_0/2$, and verify Assumption with $\beta(r,t) = e^{-t/\tau}r$, $\tilde{\gamma}(r) \triangleq 2r/3$ and $\tilde{w}(t)=e^{-t/2\tau}$. {Practical conditions for verifying this assumption in general nonlinear systems remain an open question.} 

\begin{proposition}[State-space FM $\Rightarrow$ Uniform FM operator] \label{prop:SSFMimpliesUniFM} Under Assumption~\ref{ass:operator_statespace3}, if the state-space model~\eqref{eq:StateSpace} has $w$-FM {(Definition~\ref{def:FM})} on $\mathcal{X}_0$, then $G$ has uniform FM {(Definition~\ref{def:OperatorUniformFM})} for all inputs in $\bigcup_{t\geq 0}T_{-t}(\psi(\mathcal{X}_0) \diamond \mathcal{U}^+)$.
\end{proposition}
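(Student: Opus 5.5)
The plan is to reverse the chain of inequalities used in the proof of Proposition~\ref{prop:UniFMimpliesSSFM}. Take two inputs in $\bigcup_{t\geq 0}T_{-t}(\psi(\mathcal{X}_0) \diamond \mathcal{U}^+)$. A first subtlety is that they may come with different shifts $t_a\neq t_b$. We handle this by time-invariance of $G$ together with the realization property. Writing $T_{-t}(\psi(x_0)\diamond u)$ with the larger shift, the smaller-shift input is also representable in this form: we realize it from the initial state reached after the shorter horizon, absorbing the extra time into the positive-time input. This requires $\psi(\mathcal{X}_0)\diamond\mathcal{U}^+$ to be closed under such shifts, which we take as implicit in the realization assumption. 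We therefore reduce to the common form $u_a^{\mathrm{op}} = T_{-t}(\psi(x_{0,a})\diamond u_a)$, $u_b^{\mathrm{op}} = T_{-t}(\psi(x_{0,b})\diamond u_b)$ with the same $t\geq 0$.

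By the Remark following the definition of $(x_0,u_0)$-realization, $F_G((u^{\mathrm{op}}_a)^-) = G(\psi(x_{0,a})\diamond u_a)(t) = y_a(t)$, and likewise for $b$. Hence $\|F_G(u_a^{\mathrm{op}})-F_G(u_b^{\mathrm{op}})\| = \|y_a(t)-y_b(t)\|$. We then apply the state-space $w$-FM inequality~\eqref{eq:FM_definition}, bounding it by $\beta(\|x_{0,a}-x_{0,b}\|,t) + \gamma(\operatorname{ess\,sup}_{s\in[0,t]} w(t-s)\|u_a(s)-u_b(s)\|)$. Assumption~\ref{ass:operator_statespace3} replaces the $\beta$ term by $\tilde\gamma(\operatorname{ess\,sup}_{s\leq 0}\tilde w(t-s)\|\psi(x_{0,a})(s)-\psi(x_{0,b})(s)\|)$.

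Next we merge the two memory kernels and gains. Set $\hat w \triangleq \max(w,\tilde w)$, which is continuous, nonincreasing, $[0,1]$-valued and vanishing at infinity, so it is a memory kernel. Set $\hat\gamma \triangleq 2\max(\gamma,\tilde\gamma)\in\mathcal{K}_\infty$. Then $a+b\leq 2\max(a,b)$ gives
\begin{align*}
\|y_a(t)-y_b(t)\| &\leq \hat\gamma\Big(\max\big(\operatorname{ess\,sup}_{s\leq 0}\hat w(t-s)\|\psi(x_{0,a})(s)-\psi(x_{0,b})(s)\|,\\
&\qquad \operatorname{ess\,sup}_{s\in[0,t]}\hat w(t-s)\|u_a(s)-u_b(s)\|\big)\Big).
\end{align*}
The argument of $\hat\gamma$ is exactly $\operatorname{ess\,sup}_{s\leq t}\hat w(t-s)\|(\psi(x_{0,a})\diamond u_a)(s)-(\psi(x_{0,b})\diamond u_b)(s)\|$. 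By the change of variables used in the proof of Theorem~\ref{th:equivOpeSS}, this equals $\operatorname{ess\,sup}_{s\leq 0}\hat w(-s)\|u^{\mathrm{op}}_a(s)-u^{\mathrm{op}}_b(s)\|$. This is inequality~\eqref{eq:FM_inequality} with $(\hat\gamma,\hat w)$, i.e. uniform FM.

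The main obstacle is the reduction to a common shift $t$ in the first step. The kernel merging and the identification of the sup over the concatenation are routine. If the closure-under-shift argument is unavailable, we would state the result for pairs sharing the same $t$. We would also note that the resulting kernel is $\max(w,\tilde w)$ rather than $w$ itself, which is consistent with the statement claiming uniform FM without specifying the kernel.
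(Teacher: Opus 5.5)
Your proposal is correct and follows essentially the same route as the paper: identify the $F_G$-difference with $\|y_a(t)-y_b(t)\|$ via the realization property, apply the state-space FM bound, replace the $\beta$-term using Assumption~\ref{ass:operator_statespace3}, and merge gains and kernels through $\max(\gamma,\tilde\gamma)$ and $\max(w,\tilde w)$. Your passage through the sum form with $a+b\leq 2\max(a,b)$ is slightly more careful than the paper, which plugs Assumption~\ref{ass:operator_statespace3} directly into the max-form bound without accounting for the factor $2$.

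The paper's proof also only treats pairs sharing a common shift $t$, which is exactly your fallback. However, your reduction for $t_a\neq t_b$ is not implicit in Assumption~\ref{ass:operator_statespace3}. It needs the extra hypothesis that every delayed prehistory $s\mapsto\psi(x_0)(s-d)$, $s\leq 0$, $d\geq 0$, again lies in $\psi(\mathcal{X}_0)$, with the displaced segment of $\psi(x_0)$ admissible in $\mathcal{U}^+$. It should therefore be stated as an assumption rather than taken for granted.
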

\begin{proof} Let $x_{0,a},x_{0,b} \in \mathcal{X}_0$ and $u_a, u_b \in \mathcal{U}^+$. The outputs of the state-space model~\eqref{eq:StateSpace} associated with these initial conditions and inputs are denoted $y_a$ and $y_b$ respectively. Under Assumption~\ref{ass:operator_statespace2}, the following equality holds for all $t\geq 0$:
    \begin{align}
        &\| y_a(t)-y_b(t) \| = \\
     & \| F_G((T_{-t}( \psi(x_{0,a}) \diamond u_a))^-)-F_G((T_{-t}( \psi(x_{0,b})\diamond u_b))^-) \|,  \nonumber
    \end{align}
thus, if the state-space model~\eqref{eq:StateSpace} has $w$-FM, then the following inequality stands for all $t\geq 0$:
\begin{align}
     &\| y_a(t)-y_b(t) \| \\
     &\leq \max \left ( \beta(  \|x_{0,a}-x_{0,b}\|, t), \right.  \nonumber\\
     &  \left. \gamma\left( \operatorname{ess\,sup}_{s\in[0,t]} w(t-s)  \| u_a(s)-u_b(s)\| \right)\right).  \nonumber
\end{align}
If Assumption~\ref{ass:operator_statespace3} is verified, the first term can be upper-bounded as follows:
\begin{align}
 &\beta(\|x_{0,a}-x_{0,b}\|, t) \\
 &\leq  \tilde{\gamma} \left(\operatorname{ess\,sup}_{s\leq 0}\tilde{w}(t-s) \| \psi(x_{0,a})(s)-\psi(x_{0,b})(s)\| \right). \nonumber
\end{align}
Let us consider $\gamma_{ \tiny \mbox{tot}}\in \mathcal{K}_{\infty}$ and $w_{ \tiny \mbox{tot}}$ a memory kernel, with $\gamma_{ \tiny \mbox{tot}} \triangleq \max(\tilde{\gamma} , \gamma)$ and $w_{ \tiny \mbox{tot}}\triangleq \max (w,\tilde{w})$. This provides:
\begin{align}
     & \| F_G((T_{-t}( \psi(x_{0,a}) \diamond u_a))^-)-F_G((T_{-t}( \psi(x_{0,b})\diamond u_b))^-) \| \\
     &\leq\,\gamma_{ \tiny \mbox{tot}} \left(\max ( \operatorname{ess\,sup}_{s\leq 0}w_{ \tiny \mbox{tot}}(t-s) \| \psi(x_{0,a})(s)-\psi(x_{0,b})(s)\|, \right.  \nonumber \\
     & \left. \operatorname{ess\,sup}_{s\in [0,t]} w_{ \tiny \mbox{tot}}(t-s)  \| u_a(s)-u_b(s)\| \right)\nonumber \\
     &\leq\, \gamma_{ \tiny \mbox{tot}} \left( \operatorname{ess\,sup}_{s\leq 0}w_{ \tiny \mbox{tot}}(-s) \right. \\
     & \left. \| F_G((T_{-t}( \psi(x_{0,a}) \diamond u_a))^-)-F_G((T_{-t}( \psi(x_{0,b})\diamond u_b))^-)\| \right). \nonumber
    \end{align}
Thus $G$ satisfies the uniform FM inequality of Definition~\ref{def:OperatorUniformFM} for all inputs in $\bigcup_{t\geq 0}T_{-t}(\psi(\mathcal{X}_0) \diamond \mathcal{U}^+)$.
\end{proof}

\section{Approximating $\delta$ISS systems using FM} \label{sec:deltaISSFM}

\subsection{Introducing $\delta$ISS for FM} \label{ssec:recapdeltaISS}

If the state-space model~\eqref{eq:StateSpace} has FM {(Definition~\ref{def:FM})}, then its operator realization at $x_0$ has uniform FM {(Definition~\ref{def:OperatorUniformFM})}, so Boyd and Chua's approximation of FM operators holds, and the input-output behavior of \eqref{eq:StateSpace} can be approximated arbitrarily well by a linear time-invariant state-space model followed by a nonlinear static readout~\cite{Boyd1985}. However, practical conditions to prove state-space FM remain to be investigated: to this end, we would like to show that the $\delta$ISS property is sufficient to obtain state-space FM. We recall the definition of a $\delta$ISS system.
\begin{definition}[$\delta$ISS]\label{def:deltaISS}
    The system $\dot{x}(t)=f(x(t),u(t))$ is $\delta$ISS if there exists $\beta \in \mathcal{KL}$ and $\gamma \in \mathcal{K}_{\infty}$ such that for any two initial conditions, $x_a(0),x_b(0) \in \mathcal{X}_0$, considered with two input trajectories, $u_a, u_b\in \mathcal{U}^+$, the corresponding state trajectories satisfy for all $t\geq 0$:
\begin{align} \label{eq:deltaISS_definition}
    \| x_a(t)-x_b(t) \| \leq & \;\beta\big(\| x_a(0)-x_b(0) \|,t\big) \\&+\gamma\big(\operatorname{ess\,sup}_{s\in[0,t]}\| u_a(s)-u_b(s)\| \big). \nonumber
\end{align}
\end{definition}

It is clear, by definition, that input-to-state FM {(Definition~\ref{def:ISSFM})} implies the $\delta$ISS property. However, we would like the opposite implication to hold as well. We will see that, in practice, the converse result holds semi-globally under a physically realistic assumption of equibounded inputs. The strength of this result is that it allows us to use already existing $\delta$ISS criteria to ensure state-space FM semi-globally, such as the following Lyapunov condition:

\begin{theorem}[$\delta$ISS Lyapunov conditions~\cite{Angeli2002ALA}] \label{th:LyapDeltaISS}
     The system $\dot{x}(t)=f(x(t),u(t))$ is globally $\delta$ISS (i.e. $\delta$ISS for $\mathcal{X}_0=\mathbb{R}^{n_x}$) for inputs in $\mathcal{U}^+ = L_{\infty}([0,+\infty),U)$ if there exists a smooth Lyapunov function $V:\mathbb{R}^{n_x} \times \mathbb{R}^{n_x} \to [0,+\infty)$ such that there exists $\alpha_1,\alpha_2,\kappa \in \mathcal{K}_{\infty}$ and $\rho$ positive-definite so for all $x_a,x_b \in \mathbb{R}^{n_x}$ and $u_a,u_b \in U\subseteq \mathbb{R}^{n_y}$:
\begin{subequations} \label{eq:revLyapCond}
\begin{align}
    &\alpha_1(\|x_a-x_b \|) \leq V(x_a,x_b) \leq \alpha_2(\|x_a-x_b \|) , \\
    & \kappa (\| x_a-x_b\|) \geq \|u_a - u_b\| \Rightarrow  \dot{V}(x_a,x_b) \leq -\rho(\|x_a-x_b \|) ,\label{eq:revLyapCond2}
\end{align}
\end{subequations}
where $\dot{V}$ stands for the Lie derivative of $V$ along $f$. Moreover, if the input value set $U$ is compact, {then the inputs are equibounded, and the converse also holds.}
\end{theorem}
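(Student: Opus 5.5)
Theorem~\ref{th:LyapDeltaISS} is a citation of Angeli's result, so the plan is to reproduce the standard Lyapunov argument for the sufficiency direction and to point to the converse-Lyapunov construction for the compact-$U$ case.

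\emph{Sufficiency.} Fix two initial conditions $x_a(0),x_b(0)$ and two inputs $u_a,u_b$, and set $e(t)\triangleq\|x_a(t)-x_b(t)\|$. Let $r\triangleq\operatorname{ess\,sup}_{s\in[0,t]}\|u_a(s)-u_b(s)\|$.

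1. Consider the sublevel set $\Omega_r\triangleq\{V\le\alpha_2(\kappa^{-1}(r))\}$. Outside this set we have $\alpha_2(e)\ge V>\alpha_2(\kappa^{-1}(r))$, hence $\kappa(e)>r\ge\|u_a-u_b\|$ for almost every time. Condition~\eqref{eq:revLyapCond2} then gives $\dot V\le-\rho(e)\le-\rho(\alpha_2^{-1}(V))$ along the joint trajectory $(x_a,x_b)$.
2. By this decrease, $\Omega_r$ is forward invariant. Once the trajectory enters it, $e\le\alpha_1^{-1}\circ\alpha_2\circ\kappa^{-1}(r)$, which is the candidate $\gamma$.
3. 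Before entering $\Omega_r$, $V$ satisfies the scalar differential inequality $\dot V\le-\tilde\rho(V)$ with $\tilde\rho\triangleq\rho\circ\alpha_2^{-1}$ positive definite. The standard comparison lemma (Lin--Sontag--Wang, Lemma~4.4) provides $\beta_V\in\mathcal{KL}$ with $V(t)\le\beta_V(V(0),t)$.
4. Combining the two regimes gives
\[
e(t)\le\alpha_1^{-1}\big(\beta_V(\alpha_2(e(0)),t)\big)+\alpha_1^{-1}\alpha_2\kappa^{-1}(r).
\]
Setting $\beta\triangleq\alpha_1^{-1}\circ\beta_V(\alpha_2(\cdot),\cdot)$ then yields \eqref{eq:deltaISS_definition}.

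Some technical points need care. First, $V$ is only evaluated along absolutely continuous trajectories, so the Lie-derivative bound holds almost everywhere. Second, the bound must use the running supremum up to the current time $t$; this causality issue is handled by fixing $t$ and truncating the inputs after $t$. Third, forward completeness is assumed.

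\emph{Converse (compact $U$).} The approach is to rewrite $\delta$ISS as robust global asymptotic stability of the diagonal $\{x_a=x_b\}$ for the augmented system $\dot z=F(z,d)$, with $z=(x_a,x_b)$. The disturbance $d$ ranges over a compact set: the inputs are parametrized as $u_a=u$ and $u_b=u+\varphi(\|x_a-x_b\|)\,d$, with $\varphi\in\mathcal K_\infty$ chosen so that the $\gamma$-term is dominated. The conditions are then:

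1. Choose $\varphi$ small enough that $\gamma(\varphi(e))\le e/2$ (after a suitable reshaping of $\gamma$), and project onto the compact $U$. This yields uniform global asymptotic stability of the closed set $\{x_a=x_b\}$ with respect to $d\in[-1,1]$.
2. Apply the converse Lyapunov theorem for closed, not necessarily compact, invariant sets (Lin--Sontag--Wang) to obtain a smooth $V$ with class-$\mathcal K_\infty$ bounds in $\|x_a-x_b\|$ and strict decrease.
3. Translate the strict decrease back to the implication form \eqref{eq:revLyapCond2}.

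\emph{Main obstacle.} The hardest step is the converse, specifically step~1. The converse Lyapunov theorem requires $F$ to be locally Lipschitz, and it requires the disturbance set to be compact. This is exactly why compactness of $U$ is needed. Beyond that, the detailed construction would be deferred to \cite{Angeli2002ALA}.
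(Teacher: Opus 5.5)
Your route is the paper's route. The paper gives no argument of its own for this statement: it defers to \cite{Angeli2002ALA} (a comparison argument for sufficiency, and a robustness margin plus a converse Lyapunov theorem for closed invariant sets for necessity) and only adds a remark on the hypotheses of the converse. Your sufficiency argument is correct up to one slip in steps 1 and 3. Since $\rho$ is merely positive definite, not nondecreasing, $e\ge\alpha_2^{-1}(V)$ does not give $\rho(e)\ge\rho(\alpha_2^{-1}(V))$. Replace $\rho\circ\alpha_2^{-1}$ by $\hat\rho(v)\triangleq\min\{\rho(s):\alpha_2^{-1}(v)\le s\le\alpha_1^{-1}(v)\}$. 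This function is continuous and positive definite, and it satisfies $\rho(e)\ge\hat\rho(V)$ because $e\in[\alpha_2^{-1}(V),\alpha_1^{-1}(V)]$. The comparison lemma then goes through unchanged.

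The genuine gap is in the converse. In step~1 you map $u+\varphi(\|x_a-x_b\|)d$ back into $U$ by projection, and you claim that compactness of $U$ is exactly what the converse Lyapunov theorem needs. As written, this step also needs $U$ to be \emph{convex}. Only then is the nearest-point projection single-valued and nonexpansive, and that is what makes the augmented field $F$ locally Lipschitz in $(x_a,x_b)$, uniformly in the disturbance. For nonconvex $U$ (e.g.\ $U=\{-1,1\}$ at the point $0$) the projection is set-valued and admits no continuous selection, so the Lin--Sontag--Wang theorem cannot be applied.

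Convexity is precisely one of the extra hypotheses of Angeli's converse listed in the paper's remark. The other two are the existence of an equilibrium and completeness of solutions, which the paper obtains from the $\delta$ISS estimate via \cite{Kato2025}. By deferring the construction to \cite{Angeli2002ALA} you import all three hypotheses, so your sketch proves the converse only for compact convex $U$.

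Convexity is not free either. Replacing $U$ by $\mathrm{co}(U)$, as the paper's remark proposes, requires $\delta$ISS for $\mathrm{co}(U)$-valued inputs, and this does not follow from $\delta$ISS for $U$-valued inputs. For instance, $\dot x=-x+(1-u^2)x^3$ with $U=\{-1,1\}$ is globally $\delta$ISS, since it reduces to $\dot x=-x$, whereas $u\equiv 0\in\mathrm{co}(U)$ produces finite escape times. The converse conclusion still holds in this example, with $V=(x_a-x_b)^2$, so the example invalidates the shortcut, not the statement.

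To close the gap, either add the hypothesis that the system is $\delta$ISS on a compact convex superset of $U$ (as in the proof of Theorem~\ref{th:finalEq}, where $U$ is taken to be a ball), or handle nonconvex $U$ by a different argument. A minor point: $d$ should range over the closed unit ball of $\mathbb{R}^{n_u}$, not $[-1,1]$.
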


\begin{remark} {The original converse result requires extra assumptions, namely convexity of $U$, existence of a fixed point for the vector field, and completeness of solutions~\cite{Angeli2002ALA}. These are not necessary: convexity can be assumed without loss of generality by taking the convex hull of the compact $U$, and~\cite{Kato2025} establishes both the existence of a fixed point and the completeness of solutions purely from the $\delta$ISS inequality. }
\end{remark}

Further sufficient conditions can be found in the literature to demonstrate the $\delta$ISS property, such as contraction-based conditions (Theorem~37 in~\cite{Davydov2022}), or a Killing vector field criterion in~\cite{Giaccagli2023}.

\subsection{$\delta$ISS implies input-to-state FM for equibounded inputs} \label{ssec:deltaISSFMyes}

Under an \textit{equibounded input} assumption, a note from Seung-Jean Kim shows that the $\delta$ISS property implies Boyd and Chua's notion of FM~\cite{Kim}. Since for well-behaved inputs, Boyd and Chua's notion of FM is equivalent to the uniform FM notion for operators introduced in this paper (Proposition~\ref{prop:FM_UniformFM} of Section~\ref{sec:BoydFM}), this result provides a reasonable converse in the operator-theoretic setting.

\begin{theorem}[Well-behaved $\mathcal{U}$: ($\delta$ISS $\Rightarrow$ Uniform FM operator)~\cite{Kim}] \label{th:KimAdapt} If the inputs of $\dot{x}(t)=f(x(t),u(t))$ are equibounded, Lipschitz-continuous, and the system is globally $\delta$ISS, then for all $x_0\in \mathbb{R}^{n_x}$, given an operator $G:\mathcal{U} \to \mathcal{Y}$ such that $G$ is a $(x_0,u_0)$-realization of the system, $G$ has uniform FM {(Definition~\ref{def:OperatorUniformFM})} for all inputs of the form $T_{-t}(u_0 \diamond u)$, with $u\in\mathcal{U^+}$ and $t\geq 0$.
\end{theorem}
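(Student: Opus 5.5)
The plan is to prove Boyd and Chua's (non-uniform) FM property for $G$ on the relevant inputs, and then upgrade it to uniform FM using Proposition~\ref{prop:FM_UniformFM}. The hypotheses are exactly those required there: equibounded inputs with bound $M$ and a common Lipschitz constant $K$. By Theorem~\ref{th:equivOpeSS}, it is equivalent to show the state-space FM inequality~\eqref{eq:FM_definition} at $\mathcal{X}_0=\{x_0\}$, where the $\beta$ term vanishes. So the target is: for every $\varepsilon>0$ there is $\delta>0$ such that
\[
\operatorname{ess\,sup}_{s\in[0,t]} w(t-s)\|u_a(s)-u_b(s)\|<\delta
\]
implies $\|x_a(t)-x_b(t)\|<\varepsilon$, for trajectories starting at the same $x_0$.

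First, I would fix the kernel, say $w(\tau)=1/(1+\tau)$, or any continuous nonincreasing kernel. Next I would establish a uniform state bound. Applying \eqref{eq:deltaISS_definition} against the trajectory with $u\equiv 0$ (assuming $0$ is admissible, or else a fixed reference input) gives $\|x(t)\|\le R$ for all $t\ge0$. Here $R$ depends only on $x_0$, $M$ and the bound on that reference trajectory. The reference trajectory is itself bounded, since $\delta$ISS with a constant input yields convergence to an equilibrium; this is the fixed-point argument attributed to~\cite{Kato2025}. Consequently $\|x_a(\sigma)-x_b(\sigma)\|\le 2R$ at every time $\sigma$.

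Given $\varepsilon$, I would choose $T$ such that $\beta(2R,T)<\varepsilon/2$, and $\eta$ such that $\gamma(\eta)<\varepsilon/2$. For $t\le T$, apply \eqref{eq:deltaISS_definition} from time $0$, where the initial mismatch is zero. For $t>T$, restart the $\delta$ISS estimate at time $t-T$; time invariance makes this legitimate. The result is
\[
\|x_a(t)-x_b(t)\|\le\beta(2R,T)+\gamma\Big(\operatorname{ess\,sup}_{s\in[t-T,t]}\|u_a(s)-u_b(s)\|\Big).
\]
On the window $[t-T,t]$ we have $w(t-s)\ge w(T)>0$. The fading-norm hypothesis therefore gives $\|u_a(s)-u_b(s)\|<\delta/w(T)$ there. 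Taking $\delta=\eta\,w(T)$ yields $\|x_a(t)-x_b(t)\|<\varepsilon$. Since $\delta$ does not depend on $t$, $u_a$ or $u_b$, this is in fact already uniform continuity. Collecting the resulting $\varepsilon$–$\delta$ relation into a $\mathcal{K}_\infty$ modulus $\gamma$ then gives \eqref{eq:FM_inequality} directly. The case of distinct negative-time histories $T_{-t}(u_0\diamond u)$ reduces to the same computation, because $u_0$ is common to both inputs.

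The main obstacle is building a genuine $\mathcal{K}_\infty$ function from the $\varepsilon\mapsto\delta(\varepsilon)$ relation, since the state bound $R$ caps the achievable $\varepsilon$ at roughly $2R$. I would handle this in two steps. First, define $\hat\gamma(r)=\sup\{\|x_a(t)-x_b(t)\|:\text{fading mismatch}\le r\}$; this is nondecreasing, bounded by $2R$, and tends to $0$ as $r\to0$ by the above. Then take a continuous strictly increasing majorant and add $r$ to make it unbounded, using the standard regularization already invoked in Corollary~\ref{cor:MemoryKernelComp}. The Lipschitz assumption is needed only if one routes the argument through Proposition~\ref{prop:FM_UniformFM}; the direct argument above seems not to use it.
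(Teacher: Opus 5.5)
Your argument is correct, but it takes a different route from the paper. The paper gives no self-contained proof of this statement. It invokes Kim's unpublished note~\cite{Kim} for $\delta$ISS $\Rightarrow$ Boyd--Chua FM (plain continuity) under equibounded inputs. It then upgrades to uniform FM through Proposition~\ref{prop:FM_UniformFM}, i.e.\ compactness of the equibounded, equi-Lipschitz input set in the fading norm plus Heine--Cantor. The Lipschitz hypothesis is there only to make that compactness step work.

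You instead prove the state-space inequality at $\mathcal{X}_0=\{x_0\}$ directly and transfer it to the operator via Theorem~\ref{th:equivOpeSS}. The ingredients are: restart the $\delta$ISS estimate at $t-T$, which is legitimate because the system is globally $\delta$ISS and the input class is shift-invariant; kill the $\beta$ term with a uniform mismatch bound; and use $w(t-s)\ge w(T)>0$ on the window. This route buys several things:
\begin{itemize}
\item it is elementary and checkable, which matters given the paper's own caveat about~\cite{Kim};
\item it gives an explicit modulus $\delta(\varepsilon)=\eta(\varepsilon)\,w(T(\varepsilon))$;
\item it works for any kernel with $w>0$;
\item as you note, it shows the Lipschitz assumption is superfluous here, so your opening detour through Proposition~\ref{prop:FM_UniformFM} can simply be dropped.
\end{itemize}
The paper's route buys generality of the upgrade step instead: Proposition~\ref{prop:FM_UniformFM} turns any Boyd--Chua FM operator on well-behaved inputs into a uniform one, independently of $\delta$ISS.

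One simplification: both trajectories start at $x_0$, so $\delta$ISS already gives $\|x_a(\sigma)-x_b(\sigma)\|\le\gamma(2M)$ for all $\sigma\ge 0$, with $M$ the input bound. You therefore need neither the reference trajectory, nor the equilibrium from~\cite{Kato2025}, nor admissibility of $u\equiv 0$. This bound also makes $T(\varepsilon)$, hence $\delta(\varepsilon)$, independent of $x_0$.

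Finally, your computation covers pairs $T_{-t}(u_0\diamond u_a)$, $T_{-t}(u_0\diamond u_b)$ with a common $t$. This is exactly the reading used in the proof of Theorem~\ref{th:equivOpeSS}.
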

\begin{remark}
    {Reference~\cite{Kim} remains unpublished even as a preprint. We have studied it in detail and believe its proof to be correct, but we note that its inaccessibility makes independent verification difficult.}
\end{remark}

We now demonstrate a very similar result for our state-space notion of FM using a Lyapunov converse argument. The result is established using techniques similar in spirit to those employed in the proofs that ISS implies exp-ISS or ISDS~\cite{Praly1996,Grune2002,Karafyllis2018}. However, the incremental nature of the investigated properties imposes restricting the analysis to a compact subset of the state space.

\begin{theorem}[Equibounded $\mathcal{U}^+$, compact $\mathcal{X}_0$: ($\delta$ISS $\Rightarrow$ State-space FM)] \label{th:finalEq} Let $\mathcal{X}_0$ be a compact set of initial conditions. If the inputs of $\dot{x}(t)=f(x(t),u(t))$ are equibounded and the system is globally $\delta$ISS, {then it has input-to-state FM} on $\mathcal{X}_0$ {(Definition~\ref{def:ISSFM})}.
\end{theorem}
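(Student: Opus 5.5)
The plan is to use the converse Lyapunov characterization of $\delta$ISS (Theorem~\ref{th:LyapDeltaISS}) and turn the implication-form dissipation inequality into a time-discounted estimate. First, since the inputs are equibounded, their values lie in a compact set $U$. The converse part of Theorem~\ref{th:LyapDeltaISS} then gives a smooth $V$, functions $\alpha_1,\alpha_2,\kappa\in\mathcal{K}_\infty$, and $\rho$ positive definite satisfying \eqref{eq:revLyapCond}. Next, by the bounded reachable set argument, any trajectory with initial condition in the compact $\mathcal{X}_0$ stays in a compact set. Indeed, $\delta$ISS with respect to a fixed reference trajectory (for instance the one from the fixed point guaranteed by~\cite{Kato2025} under a constant input) gives $\|x(t)\|\le R$ for all $t\ge0$, uniformly over $\mathcal{X}_0$ and the equibounded inputs. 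Hence all relevant pairs $(x_a,x_b)$ satisfy $\|x_a-x_b\|\le 2R$, and $\|u_a-u_b\|\le 2M$. On this compact range, the positive definite $\rho$ can be lower bounded by a class-$\mathcal{K}$ function, and then by a linear-in-$V$ rate. Concretely, there is $c>0$ with $\rho(\|e\|)\ge c\,\alpha(V)$ for some $\alpha\in\mathcal{K}$ on $V\le\alpha_2(2R)$. I will aim to replace $\alpha$ by a linear rate: a change of coordinates $W=\phi(V)$ with $\phi\in\mathcal{K}_\infty$ smooth, chosen so that $\phi'(V)\alpha(V)\ge \mu\phi(V)$ on the bounded range. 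This is the standard trick used to pass from ISS to exp-ISS~\cite{Praly1996}.

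With this, the implication form becomes $W\ge \chi(\|u_a-u_b\|)\Rightarrow \dot W\le -\mu W$ along pairs of trajectories, where $\chi\triangleq \phi\circ\alpha_2\circ\kappa^{-1}$. I then convert it to a dissipation form on the compact set: $\dot W\le -\mu W+\sigma(\|u_a-u_b\|)$ for some $\sigma\in\mathcal{K}_\infty$. Where $W<\chi(\|\Delta u\|)$, I bound $\dot W$ by its maximum over the compact set, which is achievable because $\nabla V$ and $f$ are continuous and $\dot W$ vanishes when $\Delta u=0$ and $e=0$. I then absorb this bound into $\sigma$. Integrating via the comparison lemma gives
\begin{equation*}
W(t)\le e^{-\mu t}W(0)+\int_0^t e^{-\mu(t-s)}\sigma(\|\Delta u(s)\|)\,ds .
\end{equation*}
I bound the integral by $\frac{2}{\mu}\operatorname{ess\,sup}_{s\in[0,t]} e^{-\mu(t-s)/2}\sigma(\|\Delta u(s)\|)$ by splitting the exponential as $e^{-\mu(t-s)/2}\cdot e^{-\mu(t-s)/2}$. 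Alternatively, I can use the max-type comparison principle of ISDS~\cite{Grune2002} directly on the implication form, which avoids the integral altogether.

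Finally, the discount must be pulled inside $\sigma$. Because $\|\Delta u\|\le 2M$, I will find a memory kernel $w$ with $e^{-\mu\tau/2}\sigma(r)\le\sigma(w(\tau)r)$-type bounds up to a $\mathcal{K}_\infty$ reshaping on $[0,2M]$. I expect this to be the main obstacle. A linear bound $\sigma(r)\le Lr$ on $[0,2M]$ is not available for general $\sigma$, so I will first majorize $\sigma$ on $[0,2M]$ by a concave or sublinear $\mathcal{K}_\infty$ function $\hat\sigma$. That gives $\lambda\hat\sigma(r)\le\hat\sigma(\lambda r)$ for $\lambda\in[0,1]$, so $e^{-\mu\tau/2}\hat\sigma(r)\le\hat\sigma(e^{-\mu\tau/2}r)$. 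Hence $w(\tau)=e^{-\mu\tau/2}$ works, with $\gamma$ built from $\alpha_1^{-1}$, $\phi^{-1}$ and $\hat\sigma$. Applying $\alpha_1^{-1}\circ\phi^{-1}$ and the weak triangle inequality $\alpha(a+b)\le\alpha(2a)+\alpha(2b)$ then yields \eqref{eq:FM_definition} with $y=x$. The $\mathcal{KL}$ term is $\beta(r,t)=\alpha_1^{-1}\circ\phi^{-1}(2e^{-\mu t}\phi(\alpha_2(r)))$, and the $\gamma$ term is of the stated ess sup form. This gives input-to-state FM on $\mathcal{X}_0$.
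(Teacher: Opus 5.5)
Your proposal is correct. It shares the paper's skeleton up to the Gr\"onwall bound: the bounded reachable set via the equilibrium of \cite{Kato2025}, the converse Lyapunov function of Theorem~\ref{th:LyapDeltaISS}, a $\mathcal{K}_\infty$ reparametrization $W=\phi(V)$ with linear decay (the paper's $\alpha_3$), and the conversion of the implication form into a dissipation inequality on the compact set. For that conversion the paper invokes Lemma~A.40 of \cite{Praly2022_vol1}. In your sketch, make explicit that the added gain is $\sigma_0(s)$, the maximum of $\max(0,\dot W+\mu W)$ over $(\mathcal{X}^2\times U^2)\cap\{\|u_a-u_b\|\le s,\ W\le\chi(s)\}$. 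This tends to $0$ as $s\to0$ by compactness, because $\dot W+\mu W=0$ when $x_a=x_b$ and $u_a=u_b$; a single maximum over the whole compact set would not vanish at $s=0$. The last shared step is splitting $e^{-\mu(t-s)}$ into two halves.

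The genuine difference is how you move the discount inside the gain. Note that your decay rate $\mu$ is the paper's $\lambda$, and the paper's $\mu$ is your $\sigma$. The paper uses a geometric-mean trick: $\mu^{-1}(e^{-\lambda t/2}\mu(r))$ is bounded both by $r$ and by $2M_u w(t)$, hence by $\sqrt{2M_u w(t) r}$. This yields the kernel $w(t)=\mu^{-1}(e^{-\lambda t/2}\mu(2M_u))/2M_u$ and the gain $\mu(\sqrt{2M_u r})$. You instead majorize $\sigma$ on $[0,2M]$ by a sublinear $\hat\sigma$, for instance $\hat\sigma(r)=\sup_{s\in(0,2M]}\sigma(s)\min(1,r/s)$, for which $\hat\sigma(r)/r$ is nonincreasing. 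This gives $e^{-\mu\tau/2}\sigma(r)\le\hat\sigma(e^{-\mu\tau/2}r)$ directly. Both arguments use equiboundedness exactly here.

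What each buys:
\begin{itemize}
\item Your route gives an exactly exponential kernel whose rate is the Lyapunov decay rate, independent of $\sigma$ and $M$; all the conservatism goes into the gain.
\item The paper's kernel depends on $\mu$ and $M_u$, and its gain carries a square-root distortion. Since $\mu\ge\mathrm{id}$, that kernel is dominated by $\frac{\mu(2M_u)}{2M_u}e^{-\lambda t/2}$, so it could also be made exponential after rescaling the gain.
\end{itemize}
Your version therefore states the conclusion more transparently, as input-to-state FM on $\mathcal{X}_0$ with an exponential memory kernel. The paper's trick avoids the majorization step and gives explicit formulas directly in terms of $\mu$ and $M_u$.
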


\begin{remark}
    Since a FM inequality holds for all compact sets of initial condition, one can say that the system has semi-global input-to-state FM {(Definition~\ref{def:ISSFM})}.
\end{remark}

\begin{proof} Let $\mathcal{X}_0$ be a compact set and let $M_x>0$ be a constant such that for all $x_0 \in \mathcal{X}_0$, $\| x_0 \| \leq M_x$. Let $\mathcal{U}^+$ be equibounded, so there exists $M_u>0$ such that the input signals take value in the compact and convex set $U\triangleq \{u \in \mathbb{R}^{n_u} : \| u\| \leq M_u \} \subset \mathbb{R}^{n_u}$ containing the origin. Since the system is $\delta$ISS, then for $u \triangleq 0$ it is $\delta$GAS, which implies the existence of a unique globally asymptotically stable equilibrium $x^*$~\cite{Kato2025}. Consequently, for all $x_0 \in \mathcal{X}_0$, the $\delta$ISS inequality provides:
\begin{align}
    \| x(t)- x^*\| \leq \beta(M_x + \|x^*\|,t)+\gamma(M_u).
\end{align}
The compact set $\mathcal{X} \triangleq \{x\in \mathbb{R}^{n_x}: \| x- x^*\| \leq \beta(M_x + \|x^*\|,0)+\gamma(M_u)\}$ is such that all state trajectories $x$ of initial condition $x(0) \in \mathcal{X}_0$ remain bounded in $\mathcal{X}$. The converse Lyapunov result of Theorem~\ref{th:LyapDeltaISS} provides the existence of a smooth Lyapunov function $V:\mathbb{R}^{n_x} \times \mathbb{R}^{n_x} \to [0,+\infty)$ such that there exists $\alpha_1,\alpha_2,\kappa \in \mathcal{K}_{\infty}$ and $\rho$ positive-definite satisfying the conditions~\eqref{eq:revLyapCond}~\cite{Angeli2002ALA}. According to Theorem~3.6.10 of~\cite{lakshmikantham1969differential} (see also Lemma~A.34 of~\cite{Praly2022_vol1}), {given a $\lambda>0$, there exists $\alpha_3 \in \mathcal{K}_{\infty}$} such that $W= \alpha_3\circ V$ satisfies: 
\begin{subequations}
\begin{align}
    &\alpha_3 \circ\alpha_1(\|x_a-x_b \|) \leq W(x_a,x_b) \leq \alpha_3 \circ\alpha_2(\|x_a-x_b \|),  \\
    & \kappa (\| x_a-x_b\|) \geq \|u_a - u_b\| \Rightarrow  \dot{W}(x_a,x_b) \leq -\lambda W (x_a,x_b) ,\label{eq:revLyapCond2}
\end{align}
\end{subequations}
with
\begin{equation}
     \dot{W}(x_a,x_b)=\frac{\partial W}{\partial x_a}(x_a,x_b)f(x_a,u_a)+\frac{\partial W}{\partial x_b}(x_a,x_b)f(x_b,u_b).
\end{equation}
Let us consider the set $\mathcal{F}= \{(x_a,x_b,u_a,u_b) :  \| x_a-x_b\| \leq \kappa^{-1} (\|u_a - u_b\|)\} \cap (\mathcal{X}^2 \times U^2)$ and the function $\mu_0$ defined by:
\begin{align}
    &\mu_0(s) \triangleq \\
    &\max_{(x_a,x_b,u_a,u_b) \in \mathcal{F},\| u_a-u_b \|  \leq s } \max(0,\dot{W}(x_a,x_b)+ \lambda W (x_a,x_b) ). \nonumber
\end{align}
{We consider $g(x_a,x_b,u_a,u_b) \triangleq \| u_a-u_b \|$, $\mathcal{P}\triangleq[0,+\infty)$, $h(r)\triangleq r$ and $f(x_a,x_b,u_a,u_b) \triangleq \max(0,\dot{W}(x_a,x_b)+ \lambda W (x_a,x_b) )$. Clearly, $\mathcal{F}$ is closed in $\mathbb{R}^{n_x^2n_u^2}$, $g$ is continuous, $h$ is continuous and strictly increasing, $f$ is continuous, for all $r\geq 0$, $g^{-1}([0,r]) \cap \mathcal{F}$ is compact, $E \triangleq g^{-1}([0,+\infty))$ is not empty, and the local minima of $g$ form the vector space $u_a=u_b$ which is connected in E.} Lemma A.40 of~\cite{Praly2022_vol1} demonstrates that $\mu_0$ is continuous, nondecreasing, and that for all $x_a,x_b \in \mathcal{X}$:
\begin{equation}
    \dot{W}(x_a,x_b) \leq -\lambda W (x_a,x_b)+ \mu(\|u_a - u_b\|),
\end{equation}
with $\mu(r) \triangleq \mu_0(r)+r$, {so we guarantee that $\mu$ is strictly increasing, and thus} $\mu \in \mathcal{K}_{\infty}$. Let $x_a,x_b$ be two state trajectories of initial conditions $x_a(0),x_b(0) \in \mathcal{X}_0$ and of inputs $u_a,u_b \in \mathcal{U}^+$ respectively. Using Grönwall's inequality, this yields:
\begin{align}
    &W(x_a(t),x_b(t))  \\
    & \leq e^{-\lambda t}W(x_a(0),x_b(0))+ \int_0^t e^{-\lambda(t-s)}\mu(\|u_a(s) - u_b(s)\|)ds  \nonumber\\
    &\leq e^{-\lambda t}W(x_a(0),x_b(0)) \\
    &+ \frac{2}{\lambda}\sup_{s\in [0,t]} e^{-\lambda(t-s)/2}\mu(\|u_a(s) - u_b(s)\|) \nonumber
\end{align}
{Now consider $w(t)\triangleq  \mu^{-1}(e^{-\lambda t/2} \mu(2M_u))/2M_u$. Since $e^{-\lambda t/2} \in [0,1]$, then $w(t) \in [0,1]$ as well, and $\lim_{t\to +\infty} w(t) = 0$. Take as well $\kappa(r) \triangleq \mu(\sqrt{2M_u r})$. By composition of $\mathcal{K}_{\infty}$ function, $\kappa \in \mathcal{K}_{\infty}$. Moreover since for all $r\in [0,2 M_u]$ and $t\geq 0$:
\begin{subequations}
    \begin{align}
        &0 \leq \mu^{-1}(e^{-\lambda t/2} \mu(r)) \leq r,\\
        &0 \leq \mu^{-1}(e^{-\lambda t/2} \mu(r)) \leq \mu^{-1}(e^{-\lambda t/2} \mu(2M_u)),
    \end{align}
\end{subequations}
we have:
\begin{equation}
    \mu^{-1}(e^{-\lambda t/2} \mu(r)) \leq \sqrt{ \mu^{-1}(e^{-\lambda t/2} \mu(2M_u)) r},
\end{equation}
and after composition by $\mu$:
\begin{equation}
    e^{-\lambda t/2} \mu(r) \leq \kappa(w(t)r),
\end{equation}
so the following inequality holds:
\begin{equation}
    e^{-\lambda(t-s)/2}\mu(\|u_a(s) - u_b(s)\|) \leq  \kappa\left( w(t-s) \|u_a(s) - u_b(s)\| \right).
\end{equation}}
Consequently:
\begin{align}
    W(x_a(t),x_b(t)) \leq&\, e^{-\lambda t}W(x_a(0),x_b(0)) +  \nonumber \\
    &\frac{2}{\lambda}  \kappa \left( \sup_{s\in [0,t]}w(t-s) \|u_a(s) - u_b(s)\| \right) .
\end{align}
Finally, the following inequalities hold:
\begin{align}
    &\|x_a(t)-x_b(t) \| \leq (\alpha_3 \circ \alpha_1)^{-1}(W(x_a(t),x_b(t))) \\
    &\leq (\alpha_3 \circ \alpha_1)^{-1} \left( \max \left( 2e^{-\lambda t}W(x_a(0),x_b(0)),   \right.\right. \\
    &\left.\left. \frac{4}{\lambda}  \kappa \left( \sup_{s\in [0,t]}w(t-s) \|u_a(s) - u_b(s)\| \right) \right) \right) \nonumber \\
    &\leq  \max \left( (\alpha_3 \circ \alpha_1)^{-1} (2e^{-\lambda t} (\alpha_3 \circ\alpha_2)(\|x_a(0)-x_b(0) \|) ),   \right.    \\
    &\left. (\alpha_3 \circ \alpha_1)^{-1} \left(\frac{4}{\lambda} \kappa \left( \sup_{s\in [0,t]}w(t-s) \|u_a(s) - u_b(s)\| \right)\right)\right),\nonumber
\end{align}
so the FM inequality \eqref{eq:FM_definition_max} is recovered.
\end{proof}

{\begin{remark}
    For a $\delta$ISS system with explicit bounds $M_x$ and $M_u$, this result yields an explicit construction of a memory kernel $w$, although this construction is conservative. Less conservative strategies to obtain $w$ remain to be explored.
\end{remark}}

\begin{remark}
    Some settings where $\delta$ISS does not imply input-to-state FM are discussed in the appendix.
\end{remark}

\subsection{$\delta$ISS representation theorem} \label{ssec:deltaISSapproximation}

\begin{theorem}[$\delta$ISS representation] {If the nonlinear state-space model~\eqref{eq:StateSpace} is globally $\delta$ISS, and has equibounded and Lipschitz-continuous inputs, then, given $x_0\in \mathcal{X}_0$, its input-output behavior $(u(t),y(t))_{t\in [0,+\infty)}$ can be approximated to an arbitrary degree of accuracy by a Wiener model, i.e. by a cascaded system consisting of a linear time-invariant state-space model and a nonlinear static readout. Formally, for all $\varepsilon>0$ there exist $m\in \mathbb{N}$, $A\in \mathbb{R}^{m \times m}$ Hurwitz, $B\in \mathbb{R}^{m \times n_u}$, $z_0 \in \mathbb{R}^m$ and a smooth nonlinear readout $h$ such that for all $u\in \mathcal{U}^+$ and $t\geq 0$:
\begin{equation}
    \left\| y(t)-h\left(z_0+\int_0^t e^{A(t-s)}Bu(s)ds,u(t)\right)\right\| < \varepsilon.
\end{equation}
If Assumption~\ref{ass:operator_statespace3} holds on a compact set $\mathcal{X}_0$ of initial conditions, the approximation applies to all initial conditions $x_0\in \mathcal{X}_0$ by adjusting $z_0$ accordingly.}
\end{theorem}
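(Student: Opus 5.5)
The plan is to chain the earlier results, $\delta$ISS $\Rightarrow$ uniform FM of the realization $\Rightarrow$ Boyd--Chua FM $\Rightarrow$ Boyd and Chua's approximation theorem, and then translate the approximant back to the state-space setting.

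\textbf{Step 1 (state-level FM).} Fix $x_0\in\mathcal{X}_0$ and a negative-time input $u_0$ such that the causal, time-invariant operator $G$ is a $(x_0,u_0)$-realization of~\eqref{eq:StateSpace}. With a little care, $u_0$ can be chosen equibounded and Lipschitz, so that the concatenated inputs stay in a well-behaved class. Because the inputs are equibounded and Lipschitz and the system is globally $\delta$ISS, Theorem~\ref{th:KimAdapt} gives uniform FM (Definition~\ref{def:OperatorUniformFM}) of the state operator on all inputs $T_{-t}(u_0\diamond u)$. Alternatively, one can use Theorem~\ref{th:finalEq} on the compact set $\{x_0\}$ together with Theorem~\ref{th:equivOpeSS}. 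Either route gives FM at the level of the state.

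\textbf{Step 2 (pass to the output).} The output is $y=g(x,u)$, which depends on the current input $u(t)$ as well as the state. I would absorb this by augmenting the memory functional: regard $(x(t),u(t))$ as the output. Bounded inputs keep the state in the compact set $\mathcal{X}$ constructed in the proof of Theorem~\ref{th:finalEq}. Since $g$ is smooth, it is uniformly continuous on $\mathcal{X}\times U$, so uniform FM of the state transfers to FM of $y$, except for the instantaneous dependence on $u(t)$. That dependence is why the readout $h$ in the statement takes $u(t)$ as an explicit second argument.

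\textbf{Step 3 (approximate and return to state space).} By Proposition~\ref{prop:FM_UniformFM}, uniform FM coincides with Boyd--Chua FM (Definition~\ref{def:OperatorFM}) on the well-behaved class. Boyd and Chua's Theorem~2 then approximates the state functional $F$ uniformly on the compact input set by $\tilde h(L_1u,\dots,L_mu)$, where $L_i$ are stable LTI convolution functionals, equivalently a Hurwitz pair $(A,B)$ realized with $z(t)=e^{At}z_0+\int_0^te^{A(t-s)}Bu(s)\,ds$.

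The negative-time contribution is $z_0=\int_{-\infty}^0e^{-As}Bu_0(s)\,ds$. It enters the output as $e^{At}z_0$, whereas the statement writes $z_0$ without the factor $e^{At}$. I would either accept the decaying term $e^{At}z_0$ and absorb the transient mismatch into $\varepsilon$ using uniform continuity of $\tilde h$ and the fact that $e^{At}z_0\to 0$, or reinterpret $z_0$ as the initial state of the filter. Composing with $g$ and using uniform continuity on compacts gives the $\varepsilon$-bound. For the last claim, Proposition~\ref{prop:SSFMimpliesUniFM} under Assumption~\ref{ass:operator_statespace3} gives a single operator $G$ realizing every $x_0\in\mathcal{X}_0$ through $\psi(x_0)$, so one approximant serves all of $\mathcal{X}_0$, with $z_0$ computed from $\psi(x_0)$.

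The main obstacle is Step 3's compactness bookkeeping. Boyd--Chua's theorem needs the input set to be compact in the $w$-fading norm, which Proposition~\ref{prop:FM_UniformFM} supplies for equibounded Lipschitz inputs. It also needs the approximation to hold uniformly over all shifts $t\ge 0$, which follows from time invariance, and the concatenation $u_0\diamond u$ must remain Lipschitz. This last point may force restricting $u_0$ or the class $\mathcal{U}^+$, and I would check it first.
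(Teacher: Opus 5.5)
Your chain is the paper's. $\delta$ISS gives state-level FM: the paper uses Theorem~\ref{th:finalEq} with Theorem~\ref{th:equivOpeSS}, which is your ``alternative''; making Theorem~\ref{th:KimAdapt} your primary route only adds a dependence on an unpublished note. Boyd and Chua's theorem is then applied to the realization, the readout is composed with $g$, and Assumption~\ref{ass:operator_statespace3} with Proposition~\ref{prop:SSFMimpliesUniFM} covers a compact $\mathcal{X}_0$, with $z_0=\int_{-\infty}^0 e^{-As}B\psi(x_0)(s)\,ds$.

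You are right that with this $z_0$ the filter state at time $t$ is $e^{At}z_0+\int_0^t e^{A(t-s)}Bu(s)\,ds$. The paper's proof never reconciles this with the displayed formula. Your first remedy, however, does not work.
\begin{itemize}
\item The discrepancy $(I-e^{At})z_0$ tends to $z_0$, not to $0$, so there is nothing transient to absorb into $\varepsilon$.
\item In general no choice of the constant makes the literal formula true. Whenever $u\equiv 0$ is admissible, the literal output $h(z_0,0)$ is constant in $t$. Meanwhile $y(t)=g(x(t),0)$ travels from $g(x_0,0)$ to $g(x^*,0)$; for the low-pass filter, $y(t)=e^{-t/\tau}x_0$. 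The bound therefore fails once these two values differ by more than $2\varepsilon$.
\end{itemize}
Only your second option is viable: read $z_0$ as the filter's initial state, i.e.\ correct the statement to $e^{At}z_0+\int_0^t e^{A(t-s)}Bu(s)\,ds$.

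The concatenation problem you defer is also real, and the paper does not handle it, so you should settle it rather than leave it as a check.
\begin{itemize}
\item If $u(0)\neq u_0(0^-)$, the inputs $T_{-t}(u_0\diamond u)$ carry a jump at $-t$.
\item For instance, take $u_0\equiv 0$ and $u\equiv c\neq 0$, and pick $\tau>0$ with $w(\tau)>0$. Any two shifts $0\le t_1<t_2\le\tau$ then lie at fading distance at least $w(\tau)\|c\|$.
\item So the input set is not totally bounded. The compactness that Boyd and Chua's Stone--Weierstrass argument needs (the lemma behind Proposition~\ref{prop:FM_UniformFM}) is lost.
\item A $(x_0,u_0)$-realization requires $u_0\diamond u\in\mathcal{U}$, and $\mathcal{U}$ has to be the equibounded Lipschitz class. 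The theorem therefore only concerns inputs with $u(0)=u_0(0^-)$, and you should state this restriction explicitly.
\end{itemize}
Similarly, ``with a little care $u_0$ can be chosen equibounded and Lipschitz'' is not derived from $\delta$ISS. The existence of such a realization is an additional hypothesis, which the paper also takes for granted when it writes $\psi(x_0)$.
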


\begin{proof}
Theorem~\ref{th:finalEq} ensures that the system has input-to-state FM {(Definition~\ref{def:ISSFM})} semi-globally. Moving from the state-space realization to the operator perspective, by Theorem~\ref{th:equivOpeSS}, the operator realization at a fixed initial condition $x_0$ necessarily inherits uniform FM {(Definition~\ref{def:OperatorUniformFM})}. Moreover, under Assumption~\ref{ass:operator_statespace3}, this result can be extended to an entire set of initial conditions using Proposition~\ref{prop:SSFMimpliesUniFM}. The operator realization satisfies the requirements for Boyd and Chua's approximation theorem. This ensures that the input-to-state behavior of~\eqref{eq:StateSpace} can be approximated to an arbitrary degree of accuracy by a cascaded system consisting of a linear time-invariant state-space model and a nonlinear static readout~\cite{Boyd1985}. {The constant $z_0$ is trivially obtained as $ z_0 = \int_{-\infty}^0 e^{-As}B\psi(x_0)(s)ds$.} By composition of nonlinear readouts, the representation can be extended to the entire nonlinear state-space model~\eqref{eq:StateSpace}.
\end{proof}

\begin{remark}{
    Assumption 2 may be by-passable by developing the approximation result directly from the state-space perspective rather than exploiting the operator-theoretic result from Boyd and Chua~\cite{Boyd1985}: this is left as an open question.}
\end{remark}

\section{FM of a current-driven {memristive device}}\label{sec:Mem}

As a final application of the state-space notion of FM studied in this paper, we come back to our motivating example, a current-driven {memristive device}, and we exhibit mild assumptions to ensure that it has FM.

\begin{proposition}[FM of the current-driven {memristive device}] For an equibounded input current $I$, the internal dynamics~\eqref{eq:Memristor2} of the {memristive device} has FM from $I$ to $x$ {(Definition~\ref{def:ISSFM})} on compact sets of initial conditions if it is globally $\delta$ISS. Moreover, if the input current $I$ is continuous and the memristance $M$ is Lipschitz continuous, the {memristive device}~\eqref{eq:Memristor} also has FM from $I$ to $U$ {(Definition~\ref{def:FM})} on compact sets of initial conditions.
\end{proposition}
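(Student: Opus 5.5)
The first claim follows directly from Theorem~\ref{th:finalEq}. The internal dynamics~\eqref{eq:Memristor2} have the form $\dot{x}=f(x,I)$. The current $I$ is equibounded, say $|I(t)|\leq M_I$, and $f$ is assumed globally $\delta$ISS. Hence, on any compact $\mathcal{X}_0$, there exist $\beta\in\mathcal{KL}$, $\gamma\in\mathcal{K}_\infty$ and a memory kernel $w$ such that \eqref{eq:FM_definition} holds with output $y=x$.

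By Proposition~\ref{prop:MemoryKernelComp}, I may replace $w$ by $\max(w,e^{-t})$. This keeps all the properties of Definition~\ref{def:memorykernel} and gives $w(0)=1$. (The construction in the proof of Theorem~\ref{th:finalEq} already yields $w(0)=1$.) I will also reuse the compact set $\mathcal{X}$ from that proof, which contains every trajectory starting in $\mathcal{X}_0$. Since $M$ is Lipschitz, hence continuous, it satisfies $M(x)\leq \bar M$ on $\mathcal{X}$ for some $\bar M>0$.

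For the second claim, I will split the output mismatch as
\begin{align}
U_a(t)-U_b(t) = \big(M(x_a(t))-M(x_b(t))\big)I_a(t) + M(x_b(t))\big(I_a(t)-I_b(t)\big).\nonumber
\end{align}
With $L$ the Lipschitz constant of $M$, the first term is bounded by $L M_I\|x_a(t)-x_b(t)\|$. This is in turn bounded by $L M_I\beta(\|x_a(0)-x_b(0)\|,t)+L M_I\gamma(\operatorname{ess\,sup}_{s\in[0,t]} w(t-s)|\Delta I(s)|)$, using the input-to-state FM from the first part. The second term is bounded by $\bar M|\Delta I(t)|$.

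The delicate step is controlling the instantaneous mismatch $|\Delta I(t)|$ by the fading-memory quantity $\operatorname{ess\,sup}_{s\in[0,t]} w(t-s)|\Delta I(s)|$. For merely $L_\infty$ inputs this fails, because an essential supremum ignores the single point $s=t$. This is where continuity of $I$ is used. Since $I_a$, $I_b$ and $w$ are continuous, $s\mapsto w(t-s)|\Delta I(s)|$ is continuous on $[0,t]$, so its essential supremum equals its supremum. That supremum is at least its value at $s=t$, namely $w(0)|\Delta I(t)|=|\Delta I(t)|$. The second term is therefore at most $\bar M\operatorname{ess\,sup}_{s\in[0,t]} w(t-s)|\Delta I(s)|$.

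Summing the two bounds gives
\begin{align}
|U_a(t)-U_b(t)|\leq \tilde\beta(\|x_a(0)-x_b(0)\|,t)+\tilde\gamma\big(\operatorname{ess\,sup}_{s\in[0,t]} w(t-s)|I_a(s)-I_b(s)|\big),\nonumber
\end{align}
with $\tilde\beta\triangleq L M_I\beta\in\mathcal{KL}$ and $\tilde\gamma(r)\triangleq L M_I\gamma(r)+\bar M r\in\mathcal{K}_\infty$. This is exactly the $w$-FM inequality \eqref{eq:FM_definition} from $I$ to $U$ on $\mathcal{X}_0$.
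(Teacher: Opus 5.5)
Your proposal is correct and follows the paper's proof. Both arguments use Theorem~\ref{th:finalEq} for FM from $I$ to $x$, and a bound $\bar M$ on the memristance over the compact set $\mathcal{X}$ that contains all trajectories. Both then split $\Delta U$ by the product rule (your split mirrors the paper's) and use $w(0)=1$ with continuity of $I$ and $w$ to absorb $|\Delta I(t)|$ into the fading supremum. Your explicit justification of $w(0)=1$ via $\max(w,e^{-t})$ and Proposition~\ref{prop:MemoryKernelComp} is a good addition. Your gains $\tilde\beta=LM_I\beta$ and $\tilde\gamma(r)=LM_I\gamma(r)+\bar M r$ are the correct ones; the paper's stated $\tilde\beta=\overline{I}\beta$ and $\tilde\gamma(r)=\overline{M}r+\gamma(r)$ drop the Lipschitz-constant and current-bound factors.
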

\begin{proof} 
Let $\overline{I}>0$ and $\lambda>0$ denote the maximum of the signal $I$ and the Lipschitz constant of $M$ respectively. Given that \eqref{eq:Memristor2} has equibounded input currents, and is globally $\delta$ISS, applying Theorem~\ref{th:finalEq} provides the FM of~\eqref{eq:Memristor2} for equibounded input currents $I$ on the compact set of initial conditions $\mathcal{X}_0$. Given a difference of two state trajectories, the following FM inequality holds:
\begin{equation}
     \|\Delta x(t) \| \leq \beta\big(\| \Delta x(0) \|,t\big) +\gamma\big(\operatorname{ess\,sup}_{s\in[0,t]} w(t-s)|\Delta I(s)| \big).
\end{equation}
Given that \eqref{eq:Memristor2} has equibounded input currents, $\delta$ISS also implies for all $\mathcal{X}_0$ compact, there exists a compact $\mathcal{X}$ such that all state trajectories $x(t)$ remain bounded in $\mathcal{X}$, meaning we can also upper bound $\| M(x(t)) \|$ by a constant $\overline{M}$ for all $t\geq 0$. Then, the following inequalities hold:
    \begin{align}
        &|\Delta U(t) | \leq |M(x_a(t))I_a(t)- M(x_b(t))I_b(t) | \\
        &\leq  |M(x_a(t))| \cdot{} |\Delta I(t) | +|I_b(t) | \cdot{}|M(x_a(t))- M(x_b(t)) |  \\
        &\leq \overline{M}|\Delta I(t) |+\overline{I} \lambda \|\Delta x \|.
    \end{align}
 Without loss of generality, we take the memory kernel $w$ with $w(0)=1$, so by continuity of $I$ and $w$, $|\Delta I(t) | \leq \operatorname{ess\,sup}_{s\in[0,t]} w(t-s)|\Delta I(s)|$. Thus, the {memristive device}~\eqref{eq:Memristor} has FM from $I(t)$ to $U(t)$ with $\tilde{\beta} \triangleq \overline{I} \beta$ and $\tilde{\gamma}(r) \triangleq \overline{M}r+\gamma(r)$.
\end{proof}

\section{Conclusion \& perspectives}\label{sec:end}

This work proposes a formulation of the FM property as an extension of the $\delta$IOS property with an explicit memory kernel, with the aim of promoting FM as a core property of real-life nonlinear systems, much like the convolution property of LTI systems. This formulation bridges the operator-theoretic notion of FM with the incremental stability concepts from state-space theory. Although full equivalence with the $\delta$ISS property does not hold in general, we identified reasonable assumptions where FM can be recovered from $\delta$ISS. A current-driven {memristive device} example illustrates the practical relevance of the framework.

{For future work, it would be valuable to derive practical Lyapunov conditions to guarantee global state-space FM, for both the input-to-output and in the input-to-state cases. Moreover, obtaining state-space FM directly from $\delta$IOS is a key open question, worth investigating.} Global extensions of Theorem~\ref{th:KimAdapt} and Theorem~\ref{th:finalEq}, in particular for inputs which are not necessarily Lipschitz-continuous and/or equibounded, remain to be found. Further applications of FM also remain to be explored. In the context of interconnected systems~\cite{Angeli2002ALA}: what can be said about the memory kernel of the interconnection of two FM systems under a small-gain condition? In the context of dissipativity theory~\cite{Sepulchre2022}: could FM help to guarantee incremental passivity properties for memristive systems? It would also be useful to develop a state-space characterization of myopia~\cite{Sandberg1997}, the spatial counterpart of FM, and to investigate how it can be integrated with the proposed state-space notion of FM for the study of partial differential equations. As a final note, the $L_{\infty}$ fading-norm used to define FM can be substituted by a generic $L_p$ fading-norm, which, for $p=1$, creates a bridge between FM and the notion of incremental integral input-to-output stability ($\delta$iIOS) worth investigating~\cite{Angeli2009}.

\bibliographystyle{IEEEtran}
\bibliography{ref}

@article{CHUA2012,
  title = {{Hodgkin-Huxley} axon is made of memristors},
  volume = {22},
  ISSN = {1793-6551},
  website = {http://dx.doi.org/10.1142/S021812741230011X},
  DOI = {10.1142/s021812741230011x},
  number = {03},
  journal = {International Journal of Bifurcation and Chaos},
  publisher = {World Scientific Pub Co Pte Ltd},
  author = {Chua, Leon and Sbitnev, Valery and Kim, Hyongsuk},
  year = {2012},
  month = Mar,
  pages = {1230011}
}

@article{Allan2021,
  title = {Nonlinear Detectability and Incremental Input/Output-to-State Stability},
  volume = {59},
  ISSN = {1095-7138},
  website = {http://dx.doi.org/10.1137/20M135039X},
  DOI = {10.1137/20m135039x},
  number = {4},
  journal = {SIAM Journal on Control and Optimization},
  publisher = {Society for Industrial & Applied Mathematics (SIAM)},
  author = {Allan,  Douglas A. and Rawlings,  James and Teel,  Andrew R.},
  year = {2021},
  month = Jan,
  pages = {3017–3039}
}

@article{Schoukens2017,
  title = {Identification of block-oriented nonlinear systems starting from linear approximations: A survey},
  volume = {85},
  ISSN = {0005-1098},
  website = {http://dx.doi.org/10.1016/j.automatica.2017.06.044},
  DOI = {10.1016/j.automatica.2017.06.044},
  journal = {Automatica},
  publisher = {Elsevier BV},
  author = {Schoukens,  Maarten and Tiels,  Koen},
  year = {2017},
  month = Nov,
  pages = {272–292}
}

@article{Hodgkin1952,
  title = {A quantitative description of membrane current and its application to conduction and excitation in nerve},
  volume = {117},
  ISSN = {1469-7793},
  website = {http://dx.doi.org/10.1113/jphysiol.1952.sp004764},
  DOI = {10.1113/jphysiol.1952.sp004764},
  number = {4},
  journal = {The Journal of Physiology},
  publisher = {Wiley},
  author = {Hodgkin,  A. L. and Huxley,  A. F.},
  year = {1952},
  month = Aug,
  pages = {500–544}
}

@book{Wiener1958,
   title =     {Nonlinear problems in random theory},
   author =    {Norbert Wiener},
   publisher = {MIT Press},
   isbn =      {9780262230049; 0262230046; 9780262730129; 026273012X},
   year =      {1958},
   edition =   {First Edition},
   website =   {libgen.li/file.php?md5=05bff71ae4038169f92416d5082187cb}
}

@article{LOHMILLER1998,
title = {On Contraction Analysis for Non-linear Systems},
journal = {Automatica},
volume = {34},
number = {6},
pages = {683-696},
year = {1998},
issn = {0005-1098},
doi = {https://doi.org/10.1016/S0005-1098(98)00019-3},
website = {https://www.sciencedirect.com/science/article/pii/S0005109898000193},
author = {Winfried Lohmiller and Jean-Jacques E. Slotine},
}

@phdthesis{Burghi2019,
  doi = {10.17863/CAM.57543},
  website = {https://www.repository.cam.ac.uk/handle/1810/310449},
  author = {Burghi,  Thiago},
  language = {en},
  title = {Feedback for neuronal system identification},
  publisher = {Apollo - University of Cambridge Repository},
  year = {2019},
  copyright = {All Rights Reserved}
}

@article{Sontag1989,
  title = {Smooth stabilization implies coprime factorization},
  volume = {34},
  ISSN = {0018-9286},
  website = {http://dx.doi.org/10.1109/9.28018},
  DOI = {10.1109/9.28018},
  number = {4},
  journal = {IEEE Transactions on Automatic Control},
  publisher = {Institute of Electrical and Electronics Engineers (IEEE)},
  author = {Sontag,  E.D.},
  year = {1989},
  month = apr,
  pages = {435–443}
}

@article{Davydov2022,
  title = {Non-Euclidean Contraction Theory for Robust Nonlinear Stability},
  volume = {67},
  ISSN = {2334-3303},
  website = {http://dx.doi.org/10.1109/TAC.2022.3183966},
  DOI = {10.1109/tac.2022.3183966},
  number = {12},
  journal = {IEEE Transactions on Automatic Control},
  publisher = {Institute of Electrical and Electronics Engineers (IEEE)},
  author = {Davydov,  Alexander and Jafarpour,  Saber and Bullo,  Francesco},
  year = {2022},
  month = dec,
  pages = {6667–6681}
}

@inbook{Sontag2008,
  title = {Input to State Stability: Basic Concepts and Results},
  ISBN = {9783540776536},
  ISSN = {0075-8434},
  website = {http://dx.doi.org/10.1007/978-3-540-77653-6_3},
  DOI = {10.1007/978-3-540-77653-6_3},
  booktitle = {Nonlinear and Optimal Control Theory},
  publisher = {Springer Berlin Heidelberg},
  author = {Sontag,  Eduardo D.},
  year = {2008},
  pages = {163–220}
}

@article{Schiller2023,
  title = {On an Integral Variant of Incremental Input/Output-to-State Stability and Its Use as a Notion of Nonlinear Detectability},
  volume = {7},
  ISSN = {2475-1456},
  wbesite = {http://dx.doi.org/10.1109/LCSYS.2023.3286174},
  DOI = {10.1109/lcsys.2023.3286174},
  journal = {IEEE Control Systems Letters},
  publisher = {Institute of Electrical and Electronics Engineers (IEEE)},
  author = {Schiller,  Julian D. and M\"{u}ller,  Matthias A.},
  year = {2023},
  pages = {2341–2346}
}

@article{Sepulchre2022,
  title = {On the incremental form of dissipativity},
  volume = {55},
  ISSN = {2405-8963},
  website = {http://dx.doi.org/10.1016/j.ifacol.2022.11.067},
  DOI = {10.1016/j.ifacol.2022.11.067},
  number = {30},
  journal = {IFAC-PapersOnLine},
  publisher = {Elsevier BV},
  author = {Sepulchre,  Rodolphe and Chaffey,  Thomas and Forni,  Fulvio},
  year = {2022},
  pages = {290–294}
}

@misc{Ortega2024,
  doi = {10.48550/ARXIV.2408.07386},
  website = {https://arxiv.org/abs/2408.07386},
  author = {Ortega,  Juan-Pablo and Rossmannek,  Florian},
  title = {Fading memory and the convolution theorem},
  publisher = {arXiv},
  year = {2024},
  copyright = {arXiv.org perpetual,  non-exclusive license}
}

@BOOK{Praly2022_vol1,
  title    = "Fonctions de Lyapunov, stabilit{\'e}, stabilisation et
              att{\'e}nuation de perturbations: partie 1 : stabilit{\'e}",
  author   = "Praly, Laurent and Bresch-Pietri, Delphine",
  year     =  2022,
  language = "fr"
}

@book{lakshmikantham1969differential,
  title={Differential and integral inequalities: theory and applications: volume I: ordinary differential equations},
  author={Lakshmikantham, Vangipuram and Leela, Srinivasa},
  year={1969},
  publisher={Academic press}
}

@inproceedings{Giaccagli2023,
  title = {Further Results on Incremental Input-to-State Stability Based on Contraction-Metric Analysis},
  website = {http://dx.doi.org/10.1109/CDC49753.2023.10384172},
  DOI = {10.1109/cdc49753.2023.10384172},
  booktitle = {2023 62nd IEEE Conference on Decision and Control (CDC)},
  publisher = {IEEE},
  author = {Giaccagli,  Mattia and Astolfi,  Daniele and Andrieu,  Vincent},
  year = {2023},
  month = dec,
  pages = {1925–1930}
}

@article{Kato2025,
  title = {Incremental global asymptotic stability equals incremental global exponential stability—but at equilibria},
  volume = {59},
  ISSN = {2405-8963},
  website = {http://dx.doi.org/10.1016/j.ifacol.2025.11.071},
  DOI = {10.1016/j.ifacol.2025.11.071},
  number = {19},
  journal = {IFAC-PapersOnLine},
  publisher = {Elsevier BV},
  author = {Kato,  Rui and Astolfi,  Daniele and Andrieu,  Vincent and Praly,  Laurent},
  year = {2025},
  pages = {424–429}
}

@book{day2013thermodynamics,
  title={The thermodynamics of simple materials with fading memory},
  author={Day, William A},
  volume={22},
  year={2013},
  publisher={Springer Science \& Business Media},
  DOI = {10.1007/978-3-642-65318},
}

@article{Kim,
  title = {A Note on the Connection Between Incremental Input-to-State Stability and Fading Memory in Nonlinear Systems},
  journal = {Unpublished},
  author = {Seung-Jean Kim},
  year={2007},
}

@article{Hassani2014,
  title = {A survey on hysteresis modeling,  identification and control},
  volume = {49},
  ISSN = {0888-3270},
  website = {http://dx.doi.org/10.1016/j.ymssp.2014.04.012},
  DOI = {10.1016/j.ymssp.2014.04.012},
  number = {1–2},
  journal = {Mechanical Systems and Signal Processing},
  publisher = {Elsevier BV},
  author = {Hassani,  Vahid and Tjahjowidodo,  Tegoeh and Do,  Thanh Nho},
  year = {2014},
  month = dec,
  pages = {209–233}
}

@article{Angeli2002ALA,
  title={A {Lyapunov} approach to incremental stability properties},
  author={David Angeli},
  journal={IEEE Trans. Autom. Control.},
  year={2002},
  volume={47},
  pages={410-421},
DOI={10.1109/9.989067},
  website={https://api.semanticscholar.org/CorpusID:3184379}
}

@article{Angeli2009,
  title = {Further Results on Incremental Input-to-State Stability},
  volume = {54},
  ISSN = {1558-2523},
  website = {http://dx.doi.org/10.1109/TAC.2009.2015561},
  DOI = {10.1109/tac.2009.2015561},
  number = {6},
  journal = {IEEE Transactions on Automatic Control},
  publisher = {Institute of Electrical and Electronics Engineers (IEEE)},
  author = {Angeli,  D.},
  year = {2009},
  month = jun,
  pages = {1386–1391}
}

@article{Kellett2014,
  title = {A compendium of comparison function results},
  volume = {26},
  ISSN = {1435-568X},
  website = {http://dx.doi.org/10.1007/s00498-014-0128-8},
  DOI = {10.1007/s00498-014-0128-8},
  number = {3},
  journal = {Mathematics of Control,  Signals,  and Systems},
  publisher = {Springer Science and Business Media LLC},
  author = {Kellett,  Christopher M.},
  year = {2014},
  month = mar,
  pages = {339–374}
}

@article{Coleman1966,
  title = {Norms and semi-groups in the theory of fading memory},
  volume = {23},
  ISSN = {1432-0673},
  website =  {http://dx.doi.org/10.1007/BF00251727},
  DOI = {10.1007/bf00251727},
  number = {2},
  journal = {Archive for Rational Mechanics and Analysis},
  publisher = {Springer Science and Business Media LLC},
  author = {Coleman,  Bernard D. and Mizel,  Victor J.},
  year = {1966},
  month = jan,
  pages = {87–123}
}

@article{ilhav2019,
  title = {The scientific work of {Bernard D. Coleman}},
  volume = {24},
  ISSN = {1741-3028},
  website =  {http://dx.doi.org/10.1177/1081286519825544},
  DOI = {10.1177/1081286519825544},
  number = {2},
  journal = {Mathematics and Mechanics of Solids},
  publisher = {SAGE Publications},
  author = {Šilhavý,  M},
  year = {2019},
  month = feb,
  pages = {339–360}
}

@article{Coleman1960,
  title = {An approximation theorem for functionals,  with applications in continuum mechanics},
  volume = {6},
  ISSN = {1432-0673},
  website =  {http://dx.doi.org/10.1007/BF00276168},
  DOI = {10.1007/bf00276168},
  number = {1},
  journal = {Archive for Rational Mechanics and Analysis},
  publisher = {Springer Science and Business Media LLC},
  author = {Coleman,  Bernard D. and Noll,  Walter},
  year = {1960},
  month = jan,
  pages = {355–370}
}

@article{Zang2003,
  title = {Fading memory and stability},
  volume = {340},
  ISSN = {0016-0032},
  website =  {http://dx.doi.org/10.1016/j.jfranklin.2003.11.002},
  DOI = {10.1016/j.jfranklin.2003.11.002},
  number = {6–7},
  journal = {Journal of the Franklin Institute},
  publisher = {Elsevier BV},
  author = {Zang,  Guoqiang and Iglesias,  Pablo A.},
  year = {2003},
  month = sep,
  pages = {489–502}
}

@ARTICLE{Dambre2012,
  title     = "Information processing capacity of dynamical systems",
  author    = "Dambre, Joni and Verstraeten, David and Schrauwen, Benjamin and
               Massar, Serge",
  journal   = "Sci. Rep.",
  publisher = "Springer Science and Business Media LLC",
  volume    =  2,
  number    =  1,
  pages     = "514",
  month     =  jul,
  year      =  2012,
  copyright = "https://creativecommons.org/licenses/by-nc-sa/3.0/",
  language  = "en",
  DOI = "10.1038/srep00514",
}

@article{Dahleh1995,
  title = {Worst-case identification of nonlinear fading memory systems},
  volume = {31},
  ISSN = {0005-1098},
  website =  {http://dx.doi.org/10.1016/0005-1098(94)00131-2},
  DOI = {10.1016/0005-1098(94)00131-2},
  number = {3},
  journal = {Automatica},
  publisher = {Elsevier BV},
  author = {Dahleh,  Munther A. and Sontag,  Eduardo D. and Tse,  David N.C. and Tsitsiklis,  John N.},
  year = {1995},
  month = mar,
  pages = {503–508}
}

@article{Partington1996,
  title = {Modeling of linear fading memory systems},
  volume = {41},
  ISSN = {0018-9286},
  website =  {http://dx.doi.org/10.1109/9.506247},
  DOI = {10.1109/9.506247},
  number = {6},
  journal = {IEEE Transactions on Automatic Control},
  publisher = {Institute of Electrical and Electronics Engineers (IEEE)},
  author = {Partington,  J.R. and Makila,  P.M.},
  year = {1996},
  month = jun,
  pages = {899–903}
}

@article{Shamma1993,
  title = {Fading-memory feedback systems and robust stability},
  volume = {29},
  ISSN = {0005-1098},
  website =  {http://dx.doi.org/10.1016/0005-1098(93)90182-S},
  DOI = {10.1016/0005-1098(93)90182-s},
  number = {1},
  journal = {Automatica},
  publisher = {Elsevier BV},
  author = {Shamma,  Jeff S. and Zhao,  Rongze},
  year = {1993},
  month = jan,
  pages = {191–200}
}

@article{Shamma1991,
  title = {The necessity of the small-gain theorem for time-varying and nonlinear systems},
  volume = {36},
  ISSN = {0018-9286},
  website =  {http://dx.doi.org/10.1109/9.90227},
  DOI = {10.1109/9.90227},
  number = {10},
  journal = {IEEE Transactions on Automatic Control},
  publisher = {Institute of Electrical and Electronics Engineers (IEEE)},
  author = {Shamma,  Jeff S.},
  year = {1991},
  pages = {1138–1147}
}

@misc{Ortega2025,
  doi = {10.48550/ARXIV.2508.19145},
  website =  {https://arxiv.org/abs/2508.19145},
  author = {Ortega,  Juan-Pablo and Rossmannek,  Florian},
  title = {Echoes of the past: A unified perspective on fading memory and echo states},
  publisher = {arXiv},
  year = {2025},
  copyright = {arXiv.org perpetual,  non-exclusive license}
}

@article{Gonon2021,
  title = {Fading memory echo state networks are universal},
  volume = {138},
  ISSN = {0893-6080},
  website =  {http://dx.doi.org/10.1016/j.neunet.2021.01.025},
  DOI = {10.1016/j.neunet.2021.01.025},
  journal = {Neural Networks},
  publisher = {Elsevier BV},
  author = {Gonon,  Lukas and Ortega,  Juan-Pablo},
  year = {2021},
  month = jun,
  pages = {10–13}
}

@article{Grune2002,
  title = {Input-to-state dynamical stability and its {Lyapunov} function characterization},
  volume = {47},
  ISSN = {0018-9286},
  website =  {http://dx.doi.org/10.1109/TAC.2002.802761},
  DOI = {10.1109/tac.2002.802761},
  number = {9},
  journal = {IEEE Transactions on Automatic Control},
  publisher = {Institute of Electrical and Electronics Engineers (IEEE)},
  author = {Grune,  L.},
  year = {2002},
  month = sep,
  pages = {1499–1504}
}

@article{Praly1996,
  title = {Stabilization in spite of matched unmodeled dynamics and an equivalent definition of input-to-state stability},
  volume = {9},
  ISSN = {1435-568X},
  website =  {http://dx.doi.org/10.1007/BF01211516},
  DOI = {10.1007/bf01211516},
  number = {1},
  journal = {Mathematics of Control,  Signals,  and Systems},
  publisher = {Springer Science and Business Media LLC},
  author = {Praly,  Laurent and Wang,  Yuan},
  year = {1996},
  month = mar,
  pages = {1–33}
}

@inbook{Karafyllis2018,
  title = {Fading Memory Input-to-State Stability},
  ISBN = {9783319910116},
  ISSN = {2197-7119},
  website =  {http://dx.doi.org/10.1007/978-3-319-91011-6_7},
  DOI = {10.1007/978-3-319-91011-6_7},
  booktitle = {Input-to-State Stability for PDEs},
  publisher = {Springer International Publishing},
  author = {Karafyllis,  Iasson and Krstic,  Miroslav},
  year = {2018},
  month = jun,
  pages = {185–191}
}

@article{Coleman1968,
  title = {On the general theory of fading memory},
  volume = {29},
  ISSN = {1432-0673},
  website =  {http://dx.doi.org/10.1007/BF00256456},
  DOI = {10.1007/bf00256456},
  number = {1},
  journal = {Archive for Rational Mechanics and Analysis},
  publisher = {Springer Science and Business Media LLC},
  author = {Coleman,  Bernard D. and Mizel,  Victor J.},
  year = {1968},
  month = jan,
  pages = {18–31}
}

@article{Sandberg2002R,
  title = {$\mathbf{R}_+$ fading memory and extensions of input-output maps},
  volume = {49},
  ISSN = {1057-7122},
  website =  {http://dx.doi.org/10.1109/TCSI.2002.804547},
  DOI = {10.1109/tcsi.2002.804547},
  number = {11},
  journal = {IEEE Transactions on Circuits and Systems I: Fundamental Theory and Applications},
  publisher = {Institute of Electrical and Electronics Engineers (IEEE)},
  author = {Sandberg,  I. W.},
  year = {2002},
  month = nov,
  pages = {1586–1591}
}

@article{Sandberg2001Z,
  title = {$\mathbf{Z}_+$ fading memory and extensions of input–output maps},
  volume = {29},
  ISSN = {1097-007X},
  website =  {http://dx.doi.org/10.1002/cta.156},
  DOI = {10.1002/cta.156},
  number = {4},
  journal = {International Journal of Circuit Theory and Applications},
  publisher = {Wiley},
  author = {Sandberg,  Irwin W.},
  year = {2001},
  month = jul,
  pages = {381–388}
}

@article{Sandberg2003,
  title = {Notes on Fading-Memory Conditions},
  volume = {22},
  ISSN = {1531-5878},
  website =  {http://dx.doi.org/10.1007/s00034-004-7012-6},
  DOI = {10.1007/s00034-004-7012-6},
  number = {1},
  journal = {Circuits,  Systems,  and Signal Processing},
  publisher = {Springer Science and Business Media LLC},
  author = {Sandberg,  Irwin W.},
  year = {2003},
  month = jan,
  pages = {43–55}
}

@article{Matthews1993,
  title = {Approximating nonlinear fading-memory operators using neural network models},
  volume = {12},
  ISSN = {1531-5878},
  website =  {http://dx.doi.org/10.1007/BF01189878},
  DOI = {10.1007/bf01189878},
  number = {2},
  journal = {Circuits,  Systems,  and Signal Processing},
  publisher = {Springer Science and Business Media LLC},
  author = {Matthews,  Michael B.},
  year = {1993},
  month = jun,
  pages = {279–307}
}

@article{Sugiura2024,
  title = {Nonessentiality of Reservoir’s Fading Memory for Universality of Reservoir Computing},
  volume = {35},
  ISSN = {2162-2388},
  website =  {http://dx.doi.org/10.1109/TNNLS.2023.3298013},
  DOI = {10.1109/tnnls.2023.3298013},
  number = {11},
  journal = {IEEE Transactions on Neural Networks and Learning Systems},
  publisher = {Institute of Electrical and Electronics Engineers (IEEE)},
  author = {Sugiura,  Shuhei and Ariizumi,  Ryo and Asai,  Toru and Azuma,  Shun-Ichi},
  year = {2024},
  month = nov,
  pages = {16801–16815}
}

@misc{Forni2025,
      title={Gradient modelling of memristive systems}, 
      author={Fulvio Forni and Rodolphe Sepulchre},
      year={2025},
      eprint={2504.10093},
      archivePrefix={arXiv},
      primaryClass={eess.SY},
      website={https://arxiv.org/abs/2504.10093}, 
}

@misc{Donchev2025,
  doi = {10.48550/ARXIV.2504.06009},
  website =  {https://arxiv.org/abs/2504.06009},
  author = {Donchev,  Tihol Ivanov and Shali,  Brayan M. and Sepulchre,  Rodolphe},
  title = {Linear time-and-space-invariant relaxation systems},
  publisher = {arXiv},
  year = {2025},
  copyright = {Creative Commons Attribution 4.0 International}
}

@article{Sandberg1997,
  title = {Uniform approximation of multidimensional myopic maps},
  volume = {44},
  ISSN = {1057-7122},
  website =  {http://dx.doi.org/10.1109/81.585959},
  DOI = {10.1109/81.585959},
  number = {6},
  journal = {IEEE Transactions on Circuits and Systems I: Fundamental Theory and Applications},
  publisher = {Institute of Electrical and Electronics Engineers (IEEE)},
  author = {Sandberg,  I.W. and Xu,  L.},
  year = {1997},
  month = jun,
  pages = {477–500}
}

@misc{Huo2024,
  doi = {10.48550/ARXIV.2403.11945},
  website =  {https://arxiv.org/abs/2403.11945},
  author = {Huo,  Yongkang and Chaffey,  Thomas and Sepulchre,  Rodolphe},
  title = {Kernel Modelling of Fading Memory Systems},
  publisher = {arXiv},
  year = {2024},
  copyright = {Creative Commons Attribution Share Alike 4.0 International}
}

@article{Boyd1985,
  title = {Fading memory and the problem of approximating nonlinear operators with {Volterra} series},
  volume = {32},
  ISSN = {0098-4094},
  website =  {http://dx.doi.org/10.1109/TCS.1985.1085649},
  DOI = {10.1109/tcs.1985.1085649},
  number = {11},
  journal = {IEEE Transactions on Circuits and Systems},
  publisher = {Institute of Electrical and Electronics Engineers (IEEE)},
  author = {Boyd,  S. and Chua,  L. O.},
  year = {1985},
  month = nov,
  pages = {1150–1161}
}

@article{Sepulchre2021,
  title = {Fading Memory [From the Editor]},
  volume = {41},
  ISSN = {1941-000X},
  website =  {http://dx.doi.org/10.1109/MCS.2020.3033098},
  DOI = {10.1109/mcs.2020.3033098},
  number = {1},
  journal = {IEEE Control Systems},
  publisher = {Institute of Electrical and Electronics Engineers (IEEE)},
  author = {Sepulchre,  Rodolphe},
  year = {2021},
  month = feb,
  pages = {4–5}
}

@article{Chua1976,
  title = {Memristive devices and systems},
  volume = {64},
  ISSN = {0018-9219},
  website =  {http://dx.doi.org/10.1109/PROC.1976.10092},
  DOI = {10.1109/proc.1976.10092},
  number = {2},
  journal = {Proceedings of the IEEE},
  publisher = {Institute of Electrical and Electronics Engineers (IEEE)},
  author = {Chua,  L. O. and Sung Mo Kang},
  year = {1976},
  pages = {209–223}
}

\begin{IEEEbiography}
[{\includegraphics[width=1in,height=1.25in,clip,keepaspectratio]{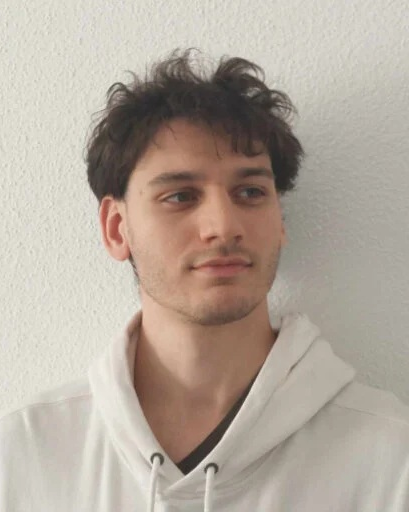}}]{Gustave Bainier} received an Engineering Master’s degree from École Centrale Nantes, France, in 2020, and a Ph.D. degree in Automatic Control from Université de Lorraine, France, in 2024. His doctoral research, carried out at CRAN, CNRS in Nancy, France, under the supervision of J.-C. Ponsart and B. Marx, focused on theoretical developments in the Linear Parameter Varying (LPV) and Takagi–Sugeno (T-S) control frameworks, with applications to fault diagnosis. He is currently a Post-Doctoral Researcher with the Neuroengineering Laboratory at University of Liège, Belgium, where his research explores applications of control theory to brain-inspired computing. His research interests include neuroengineering, LPV systems, and nonlinear stability analysis.
\end{IEEEbiography}

\begin{IEEEbiography}
[{\includegraphics[width=1in,height=1.25in,clip,keepaspectratio]{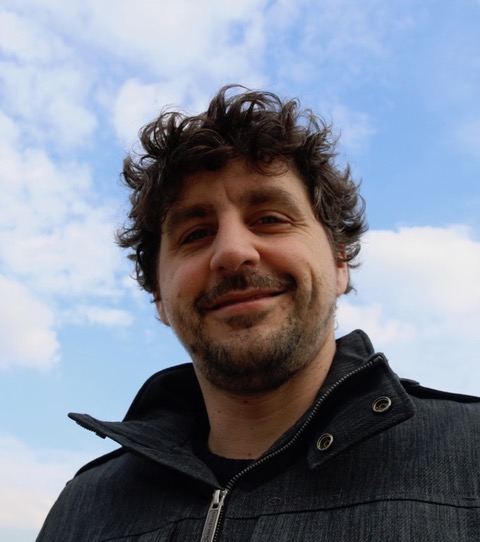}}]{Antoine Chaillet} is a Professor at Laboratoire des Signaux et Systèmes (L2S), CentraleSupélec, Paris Saclay University, France. He received his PhD from Université Paris Sud, France in 2005. He was a post-doctoral fellow at Università di Pisa, Italy. He was a junior member of Institut Universitaire de France from 2016 to 2021. He served as an as-sociate editor for IEEE Transactions on Automatic Control from 2019 to 2024. He is now director of the H-CODE institute of Paris Saclay University. His research interests include stability and robustness analysis of nonlinear systems, time-delay systems, and neuroscience applications. He his the author of around 100 peer-reviewed papers on these topics.
\end{IEEEbiography}

\begin{IEEEbiography}[{\includegraphics[width=1in,height=1.25in,clip,keepaspectratio]{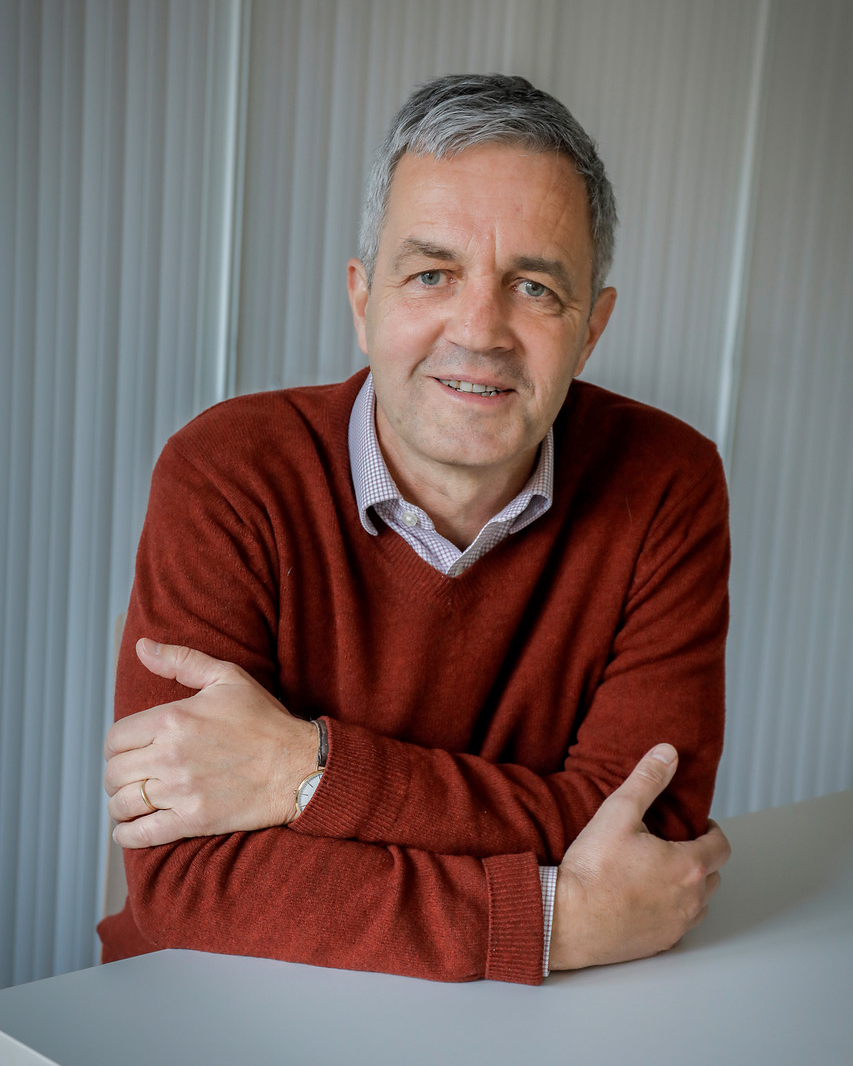}}]{Rodolphe Sepulchre} is professor of engineering at the KU Leuven (Belgium) and at the University of  Cambridge (UK). He is a fellow of EUCA (2026), IFAC (2020), IEEE (2009), and SIAM (2015). He received the IEEE CSS Antonio Ruberti Young Researcher Prize in 2008 and the IEEE CSS George S. Axelby Outstanding Paper Award in 2020. He was elected at the Royal Academy of Belgium in 2013.  He co-authored the monographs Constructive Nonlinear Control (1997, with M. Jankovic and P. Kokotovic) and Optimization on Matrix Manifolds (2008, with P.-A. Absil and R. Mahony). His current research interests are in neuromorphic control.
\end{IEEEbiography}

\begin{IEEEbiography}
[{\includegraphics[width=1in,height=1.25in,clip,keepaspectratio]{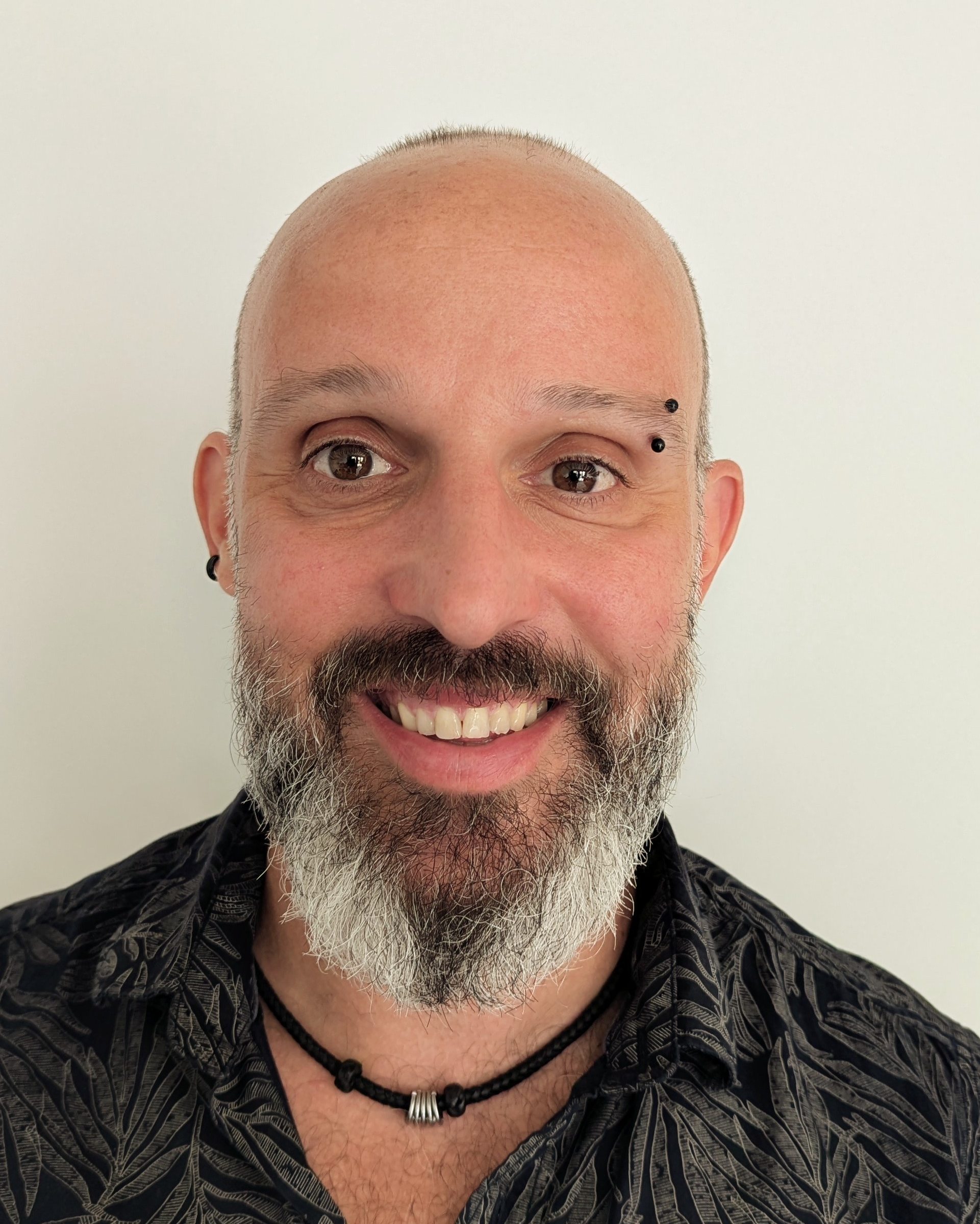}}]{Alessio Franci}  is a professor in the Montefiore Institute of the University of Liege and a grantee of the WEL Research Institute. He is a recipient of the 2025 IEEE CSS George S. Axelby Outstanding Paper Award. He got his MSc from the University of Pisa in 2008 and his PhD from the University of Paris Sud XI in 2012. He was a postdoctoral fellow at the University of Liege, INRIA Lille, and University of Cambridge. Between 2015 and 2022 he was a professor in the Math Department of the National Autonomous University of Mexico. His research interests are in neuromorphic control theory and spiking control systems, the design of excitable circuits, sensors, and actuators, and applications in neuroscience and robotics.
\end{IEEEbiography}

\appendix
\subsection{Some settings where $\delta$ISS does not imply input-to-state FM}\label{app:deltaISSFMno}

Several counterexamples demonstrate that the implication fails in the presence of delays or time-varying parameters. The discussion becomes particularly nuanced for time-invariant, delay-free systems, as we were unable to construct a globally valid counterexample, and can therefore only refute the implication by restricting the set of initial conditions. In the absence of delay, the contradictions are obtained via an unbounded sequence of input functions, causing the reverse implication to fail from a purely mathematical standpoint.

Let us begin by showing that $\delta$ISS does not imply FM when arbitrarily large input delays are allowed.

\begin{counterexample}[Input delay: ($\delta$ISS $\not \Rightarrow$ FM)] \label{cex:1}
    The following system is $\delta$ISS (at $t_0=0$), but it does not have input-to-state FM (at $t_0=0$). 
    \begin{equation} \label{eq:counterexample1}
    \dot{x}(t)=-x(t)+\int_0^{\min(1,t)} u(s)ds.
\end{equation}
\end{counterexample}
\begin{proof} By linearity, the difference between two trajectories satisfies the following inequalities:
\begin{align}
    | \Delta x(t) |& \leq  e^{-t} |\Delta x(0)| +\int_0^t e^{s-t} \int_0^{\min(1,s)} |\Delta u(r) |dr \,ds \\
    & \leq e^{-t} |\Delta x(0)| +\int_0^t e^{s-t} \,\operatorname{ess\,sup}_{r\in[0,1]} |\Delta u(r) |ds \\
    &  \leq e^{-t} |\Delta x(0)| +  \operatorname{ess\,sup}_{s\in[0,t]} |\Delta u(s) | ,
\end{align}
hence the system is $\delta$ISS at $t_0=0$. However, consider $\Delta u(t)=1$ for $t\leq 1$ and $\Delta u(t)=0$ for $t>1$. {The incremental dynamics yield $\Delta \dot{x}(t)=-\Delta x(t)+1$ for all $t>1$, so $\lim_{t\to +\infty} \Delta x(t) = 1$}, and the CICO property of FM is not verified (Proposition~\ref{prop:CICO}), hence the system does not have FM.
\end{proof}
\begin{remark}
    Similar infinite memory effects may arise in practice, such as in hysteresis phenomena~\cite{Hassani2014}.
\end{remark}
\begin{remark}
    The delay of this counterexample can be removed using a time-varying integrator:
    \begin{subequations}
        \begin{align}
            \dot{x}_1(t)&=1_{\leq 1}(t)u(t)\\
            \dot{x}_2(t)&=-x_2(t)+x_1(t) 
        \end{align}
    \end{subequations}
    with $1_{\leq 1}(t) \triangleq 1$ if $t\leq 1$, and $1_{\leq 1}(t) \triangleq 0$ else. However, this state-augmented system is not $\delta$ISS anymore.
\end{remark}

More surprisingly, even in the absence of delays, time-varying parameters are sufficient to break the implication. The intuition behind the following counterexample is based on the following averaging I/O map:
\begin{equation} \label{eq:counterexample2_pre}
    y(t)=\frac{1}{t+1}\int_0^t u(s)ds.
\end{equation}
Although the relative influence of past inputs $u$ on the output $y$ diminishes over time, this decay occurs regardless of how far in the past the inputs were applied, and crucially, it is also vanishing linearly. This averaging process exhibits fragile stability, making the system lack a decaying memory mechanism, i.e. FM. By differentiation, \eqref{eq:counterexample2_pre} admits a time-varying and delay-free state-space representation.
\begin{counterexample}[Time-varying: ($\delta$ISS $\not \Rightarrow$ FM)] \label{cex:2}
    The following system is $\delta$ISS (at $t_0=0$), but it does not have input-to-state FM (at $t_0=0$).
    \begin{equation}\label{eq:counterexample2}
        \dot{x}(t)=\frac{1}{t+1}\left(-x(t) +u(t)\right).
    \end{equation} 
\end{counterexample}

\begin{proof} By linearity, the difference between two trajectories of \eqref{eq:counterexample2} is easily determined:
\begin{align}
\Delta x(t) &= e^{-\int_0^t \frac{ds}{s+1}}\Delta x(0)+\int_0^t e^{-\int_s^t \frac{dr}{r+1}}\frac{\Delta u(s)}{s+1}ds\\
&=e^{-\ln(t+1)}\Delta x(0)+\int_0^t e^{\cancel{\ln(s+1)}-\ln(t+1)}\frac{\Delta u(s)}{\cancel{s+1}}ds \\
&=\frac{1}{t+1}\left(\Delta x(0)+\int_0^t \Delta u(s)ds\right),
\end{align}
so the following inequality holds, demonstrating that the system is $\delta$ISS at $t_0=0$:
\begin{equation}
    |\Delta x(t)| \leq \frac{1}{t+1}|\Delta x(0)| + \underbrace{\frac{t}{t+1}}_{\leq 1}\operatorname{ess\,sup}_{s\in[0,t]} |\Delta u(s) | .
\end{equation}
The claim that \eqref{eq:counterexample2} does not have input-to-state FM at $t_0=0$ is shown by contradiction. Assuming that \eqref{eq:counterexample2} has input-to-state FM at $t_0=0$, there exists $\gamma \in \mathcal{K}_{\infty}$ and a memory kernel $w:[0,+\infty)\to(0,1]$ satisfying Definition~\ref{def:memorykernel} ($0$ can be excluded thanks to Proposition~\ref{prop:MemoryKernelComp}) such that for all $\Delta u \in L_{\infty}$ and $\Delta x(0)=0$:
\begin{align} \label{eq:ineqProofCounterExample2}
    |\Delta x(t)| = &\frac{\left| \int_0^t \Delta u(s)ds\right|}{t+1} \leq  \gamma \left( \operatorname{ess\,sup}_{s\in[0,t]} w(t-s)|\Delta u(s) | \right).
\end{align}
Let us show that $(t+1)^{-1}\int_0^t w(s)^{-1}ds$ diverges as $t\to+\infty$. Let $M>0$. It is shown that there exists $t^*>0$ such that for all $t\geq t^*$, $(t+1)^{-1}\int_0^t w(s)^{-1}ds \geq M$. Since $w(t)$ is nonincreasing and $\lim_{t\to +\infty} w(t) =0^+$, $w(t)^{-1}$ is nondecreasing and $\lim_{t\to +\infty} w(t)^{-1} =+\infty$, hence there exists $t_1>0$ such that for all $t\geq t_1$, $w(t)^{-1} \geq 2M$. Take $t^*=2t_1+1$. The following inequalities hold for all $t\geq t^*$:
\begin{align}
    \frac{1}{t+1} \int_0^t\frac{ds }{ w(s)} &\geq \frac{1}{t+1} \int_{t_1}^{t}\frac{ds }{ w(s)} \geq \frac{t-t_1}{t+1}2M \geq M  ,
\end{align}
thus demonstrating the divergence of $(t+1)^{-1}\int_0^t w(s)^{-1}ds$.
In particular, this demonstrates that $|\Delta x(T)|$ diverges as $T\to +\infty$ for inputs defined by $\Delta u(t)\triangleq w(T-t)^{-1}$ for $t\in [0,T]$. However, the FM inequality~\eqref{eq:ineqProofCounterExample2} yields:
\begin{equation}
    |\Delta x(T)| \leq \gamma \left( \operatorname{ess\,sup}_{s\in[0,T]} w(T-s) w(T-s)^{-1} \right) = \gamma(1),
\end{equation}
meaning $\gamma(1)\to +\infty$, a contradiction.
\end{proof}
\begin{remark}
    In particular, this system provides a time-varying counterexample to the claim that a CICO property is sufficient to guarantee input-to-state FM for a $\delta$ISS system, which would otherwise appear to be a natural converse of Proposition~\ref{prop:CICO}.
\end{remark}
Now what if the system is both delay-free and time-invariant, i.e. of the form $\dot{x}(t)=f(x(t),u(t))$? A variation of the previous counterexample shows that restricting $\mathcal{X}_0$ is sufficient to obtain a $\delta$ISS system without FM in this context as well.
\begin{counterexample}[Restricting $\mathcal{X}_0$: ($\delta$ISS $\not \Rightarrow$ FM)] The following system is $\delta$ISS for $\mathcal{X}_0=\{1\}\times \mathbb{R}$, but it does not have input-to-state FM for $\mathcal{X}_0=\{1\}\times \mathbb{R}$.\begin{subequations}
        \begin{align}
            \dot{x}_1(t)&=-x_1(t)^2, \\
            \dot{x}_2(t)&=x_1(t)(-x_2(t)+u(t)).
        \end{align}
    \end{subequations}
\end{counterexample}
\begin{proof}
    If $x_1(0)=1$, then $x_1(t)=\frac{1}{t+1}$, and the rest of the proof follows from the previous counterexample.
\end{proof}
We were unable to find a counterexample holding globally ($\mathcal{X}_0=\mathbb{R}^{n_x}$). Thus, the question of whether global $\delta$ISS implies global input-to-state FM for time-invariant, delay-free systems remains open.
\end{document}